\newtheorem{lemma}{Lemma}
\newtheorem{theorem}{Theorem}
\newtheorem{corollary}{Corollary}
\newtheorem{conjecture}{Conjecture}
\journal{Theoretial Computer Science}
\begin{document}

\begin{frontmatter}

\title{On Rivest-Vuillemin Conjecture for Fourteen Variables}


\author[mymainaddress]{Guang-mo Tong \corref{mycorrespondingauthor}}
\ead{gxt140030@utdallas.edu}
\author[mymainaddress]{Weili Wu}
\author[mymainaddress]{Ding-Zhu Du}

\cortext[mycorrespondingauthor]{Corresponding author}

\address[mymainaddress]{Department of Computer Science, University of Texas at Dallas, TX, USA}

\begin{abstract}
A boolean function $f(x_1,...,x_n)$ is \textit{weakly symmetric} if it is invariant under a transitive permutation group on its variables. A boolean function $f(x_1,...,x_n)$ is \textit{elusive} if we have to check all $x_1$,..., $x_n$ to determine the output of $f(x_1,...,x_n)$ in the worst-case. It is conjectured that every nontrivial monotone weakly symmetric boolean function is elusive, which has been open for a long time. In this paper, we show that this conjecture is true for $n=14$.
\end{abstract}

\begin{keyword}
Boolean function \sep Decision tree complexity \sep Elusiveness 
\MSC[2010] 00-01\sep  99-00
\end{keyword}

\end{frontmatter}

\linenumbers

\section{Introduction}
\label{sec:intro}
A boolean function $f(x_1,...,x_n)$ can be computed by a binary tree where each non-leaf node is labeled by a variable and the leaves are labeled by 0 or 1. For each non-leaf node, the two edges linking to its left and right child are labeled by 1 and 0, respectively. For any path from root to leave, every variable appears at most once. An input $\boldsymbol{x}$ of $f$ is a subset of $\{x_1,..., x_n\}$ where $x_i=1$ if and only if $x_i 
\in \boldsymbol{x}$. For each input $\boldsymbol{x}$ of $f$, its value can be computed according the decision tree of $f(x_1,...,x_n)$. That is, starting from the root, if the label of root is in $\boldsymbol{x}$, we go to its left child; otherwise we go to its right child. The above process is repeated until a leaf is reached and the function value of $x$ is given by the leaf's label. For each input $\boldsymbol{x}$, the computation time depends on the length of the corresponding root-leaf path, i.e., the number of checked variables. The depth of a decision tree is the maximum length among all root-leaf paths. One can see that for a certain boolean function $f$, there can be more than one decision trees. We denote by $D(f)$ the minimum depth among all its decision trees. A boolean function $f$ of $n$ variables is called elusive if $D(f)=n$. In other words, if $f$ is elusive, then for each of its decision trees, there exists an input $\boldsymbol{x}$ such that deciding $f(\boldsymbol{x})$ requires checking all the variables. A boolean function $f$ is monotone non-increasing if $f(\boldsymbol{x})=1$ implies $f(\boldsymbol{x^{'}})=1$ for each $x^{'} \subseteq \boldsymbol{x}$, and, similarly, $f$ is monotone non-decreasing if $f(\boldsymbol{x})=0$ implies $f(\boldsymbol{x^{'}})=0$ for each $x^{'} \subseteq \boldsymbol{x}$. For a permutation $\sigma$ on $\{1,...,n\}$ and an input $\boldsymbol{x}=\{x_{a_1},...,a_{a_m}\}$, let $\sigma(\boldsymbol{x})=\{x_{\sigma(a_1)},...,x_{\sigma(a_m)}\}$. A boolean function $f$ is $\sigma$-invariant if $f(\boldsymbol{x})=f(\sigma(\boldsymbol{x}))$ for every $\boldsymbol{x}$. For a group $G$ of permutations, $f$ is called $G$-invariant is $f$ is $\sigma$-invariant for every $\sigma \in G$. The symmetry of $f$ is characterized by its invariant group. An $G$-invariant boolean function $f$ is weakly symmetric if $G$ is transitive on $\{1,..., n\}$.

\textbf{Rivest-Vuillemin conjecture}: every nontrivial monotone weakly symmetric boolean function is elusive. 

In \cite{gao2002rivest}, \cite{gao1999rivest} and \cite{sui1999nontrivial}, it has been shown that it is also true when $n=6$, 10 and 12. Therefore, the Rivest-Vuillemin conjecture is true for $n$ less than 14. In this paper, we consider $n=14$. 

\section{Preliminaries}
\label{sec:pre}
Let $\overline{f}$ be the opposite function of $f$, i.e., $\overline{f}(\boldsymbol{x})= 0$ iff $f(\boldsymbol{x})=1$. It can be easily seen that $D(\overline{f})=D(f)$. Therefore, to prove the Rivest-Vuillemin conjecture, it suffices to consider monotone non-increasing boolean functions. Each monotone non-increasing boolean function $f(x_1,...,x_n)$ can be equivalently represented as an abstract simplicial complex $\varDelta_f$ on $n$ vertices defined by 
$\varDelta_f=\{ x | x \subseteq \{x_1,..., x_n\}~\text{and}~f(x)=1\}$. The faces in $\varDelta_f$ correspond to the true inputs of $f$. For an abstract simplicial complex $\varDelta$, the Euler characteristic $\chi(\varDelta)$ is defined as
\begin{equation}
\label{eq:euler}
\chi(\varDelta)=\sum_{i=1}^{n}(-1)^{i+1} r(\varDelta,i),
\end{equation}
where $r(G,i)=|\{x \in \varDelta~|~|x|=i\}|$.
Note that if $f$ is $G$-invariant then $G$ is a group of automorphisms on $\varDelta_f$. Kahn \textit{et al.} \cite{kahn1984topological} first observe that the evasiveness of a monotone boolean function $f$ is related to the topological property of $\varDelta_f$. 

\begin{theorem}{(\cite{kahn1984topological})}
\label{theorem:Kahn}
If a monotone  boolean function $f$ is not evasive, then $\varDelta_f$ is collapsible and therefore contractible and $Z_p$-acyclic.
\end{theorem}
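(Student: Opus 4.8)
The plan is to reduce the decision-tree notion of evasiveness to an inductive, purely combinatorial reformulation on the complex $\varDelta_f$, and then to show that this reformulation forces collapsibility. First I would recast non-evasiveness recursively. Fix an optimal decision tree for the monotone function $f$ and let $x_v$ be the variable queried at the root. Setting $x_v=1$ and $x_v=0$ produces two boolean functions on the remaining $n-1$ variables whose complexes are exactly the \emph{link} $\mathrm{lk}(v)=\{\sigma\in\varDelta_f : v\notin\sigma,\ \sigma\cup\{v\}\in\varDelta_f\}$ and the \emph{deletion} $\varDelta_f\setminus v=\{\sigma\in\varDelta_f : v\notin\sigma\}$, respectively. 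Because the root-to-leaf paths of the two subtrees have length at most $n-2$, both of these functions are again non-evasive, and an easy induction on $n$ shows that non-evasiveness of $f$ is equivalent to the existence of a vertex $v$ for which both $\mathrm{lk}(v)$ and $\varDelta_f\setminus v$ are non-evasive. The base case is a single vertex, which is trivially collapsible.

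The heart of the argument is the following lemma, which I would prove by induction on the number of vertices: if $\mathrm{lk}(v)$ is collapsible, then $\varDelta_f$ collapses onto $\varDelta_f\setminus v$. To see this, note that every face of $\varDelta_f$ containing $v$ has the form $\{v\}\cup\sigma$ with $\sigma\in\mathrm{lk}(v)$, so the faces through $v$ form an isomorphic copy of the cone $v*\mathrm{lk}(v)$. A collapsing sequence for $\mathrm{lk}(v)$ removes free pairs $(\sigma,\tau)$ with $\sigma\subset\tau$; lifting each such pair to $(\{v\}\cup\sigma,\ \{v\}\cup\tau)$ yields a valid sequence of elementary collapses in $\varDelta_f$, since the cofaces of $\{v\}\cup\sigma$ in $\varDelta_f$ correspond exactly to the cofaces of $\sigma$ in $\mathrm{lk}(v)$. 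After performing all these lifted collapses, the only faces still containing $v$ are $\{v\}$ and the single edge $\{v\}\cup\{p\}$ surviving from the collapse of $\mathrm{lk}(v)$ to a point $p$; a final elementary collapse of this pair deletes $v$ altogether, leaving precisely $\varDelta_f\setminus v$.

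Granting the lemma, the theorem follows quickly. By the recursive characterization, both $\mathrm{lk}(v)$ and $\varDelta_f\setminus v$ are non-evasive, hence collapsible by the induction hypothesis on the number of vertices. The lemma then gives a collapse of $\varDelta_f$ onto $\varDelta_f\setminus v$, and chaining this with a collapsing sequence for $\varDelta_f\setminus v$ (collapsibility being transitive) produces a full collapse of $\varDelta_f$ to a point. Finally, since each elementary collapse is a deformation retraction of the geometric realization, it preserves homotopy type, so a collapsible complex is contractible; and a contractible space has trivial reduced homology with any coefficient ring, in particular it is $Z_p$-acyclic.

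I expect the main obstacle to be the verification inside the lemma that the lifted pairs $(\{v\}\cup\sigma,\ \{v\}\cup\tau)$ are genuinely free at the moment they are removed, i.e., that reintroducing $v$ does not create extra cofaces that spoil freeness, together with the bookkeeping needed to confirm that exactly the edge $\{v,p\}$ and the vertex $\{v\}$ remain before the terminal collapse. The equivalence between the decision-tree definition and the recursive link/deletion definition, while conceptually clean, must also be stated carefully so that the outer induction is well founded.
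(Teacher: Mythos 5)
The first thing to note is that the paper contains no proof of this statement: Theorem~\ref{theorem:Kahn} is quoted verbatim from Kahn, Saks and Sturtevant \cite{kahn1984topological} and used as a black box, so there is no internal argument to compare yours against. What you have written is, in essence, the standard proof from that source, and it is correct: the recursive characterization of non-evasiveness (a complex is non-evasive iff some vertex $v$ has both $\mathrm{lk}(v)$ and the deletion non-evasive, obtained by splitting an optimal decision tree at its root, with the identifications $\mathrm{Link}(\varDelta_f,x_v)=\varDelta_{f_{x_v=1}}$ and $\mathrm{Deletion}(\varDelta_f,x_v)=\varDelta_{f_{x_v=0}}$); the key lemma that collapsibility of $\mathrm{lk}(v)$ yields a collapse of $\varDelta_f$ onto the deletion, by lifting each free pair $(\sigma,\tau)$ of the link to $(\{v\}\cup\sigma,\{v\}\cup\tau)$ --- freeness survives the lift exactly because every coface of a face containing $v$ itself contains $v$, hence lies in the cone $v*\mathrm{lk}(v)$; the terminal collapse of the pair $(\{v\},\{v,p\})$; and the concatenation of the collapse onto the deletion with a collapse of the deletion to a point, followed by the routine implications collapsible $\Rightarrow$ contractible $\Rightarrow$ $Z_p$-acyclic.

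The one place your induction needs tightening is the treatment of degenerate complexes. If $f$ is identically $0$, then $f$ is non-evasive but $\varDelta_f$ is void, which is not collapsible, so the theorem (and your induction hypothesis) must be read as ``non-evasive and nonvoid implies collapsible''; this is harmless here since the paper only ever applies the theorem to nontrivial $f$. Inside the inductive step, the restriction $f_{x_v=1}$ can be identically $0$ (when $v$ lies in no face of $\varDelta_f$), in which case $\mathrm{lk}(v)$ is void and the induction hypothesis cannot be invoked for it; but then $\varDelta_f$ equals its own deletion at $v$, so the step goes through using the deletion alone. The other potentially troublesome cases cannot occur: $\mathrm{lk}(v)=\{\emptyset\}$ would mean $f_{x_v=1}$ is true only on the empty input, and $\mathrm{Deletion}(\varDelta_f,x_v)=\{\emptyset\}$ would mean the same for $f_{x_v=0}$; such functions are evasive, contradicting the choice of $v$. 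With these remarks inserted, the outer induction is well founded and your argument is complete.
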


For two primes $p$ and $q$, we denote by $\Psi_p^q$ the class of the finite group G with a normal subgroup $P \vartriangleleft H \vartriangleleft G$, such that $P$ is of $p$-power order, the quotient group $G/H$ is of $q$-power order, and the quotient group $H/P$ is cyclic; denote by $\Psi_p$ the class of the finite group G with a normal $p$-subgroup $P \vartriangleleft G$ such that the quotient group $G/P$ is cyclic. The following fixed-point theory is attributed to Oliver,

\begin{theorem}{(\cite{oliver1975fixed})}
\label{theorem:Oliver}
For a collapsible abstract complex $\varDelta$ with a group $G$ of automorphisms on $\varDelta$, if $G$ is cyclic or $G \in \Psi_p$ for some prime $p$, then $\chi(\varDelta^G)=1$; if $G \in \Psi_p^{q}$, then  $\chi(\varDelta^G)\equiv1~(mod~q)$, where
\begin{equation*}
\varDelta^G=\{\{H_1,...,H_k\}|H_1,...,H_k \text{~are orbits of~} G \text{~and~} H_1 \cup ... \cup H_k \in \varDelta\} \cup \{\emptyset\}.
\end{equation*}
\end{theorem}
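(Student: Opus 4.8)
The plan is to reduce everything to two classical inputs from the theory of finite transformation groups — Smith theory and the Hopf trace (Lefschetz) formula — and to exploit the tower of normal subgroups defining the classes $\Psi_p$ and $\Psi_p^q$ by computing fixed-point complexes one layer at a time. Throughout I use that a collapsible $\varDelta$ is $\mathbb{Z}$-acyclic, hence $\mathbb{Z}_p$-acyclic for every prime $p$, and that a finitely generated $\mathbb{Z}_p$-acyclic complex is $\mathbb{Q}$-acyclic: by the universal coefficient theorem, $\tilde H_i(\varDelta;\mathbb{Z}_p)=0$ forces $\tilde H_i(\varDelta;\mathbb{Z})$ to be finite of order prime to $p$, which vanishes after $\otimes\,\mathbb{Q}$, so in particular $\chi(\varDelta)=1$. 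I also record the combinatorial identity $\varDelta^G=(\varDelta^N)^{G/N}$ for $N\vartriangleleft G$, which is immediate from the orbit description in the statement: the orbits of $G$ refine through those of $N$, and $G/N$ acts on the orbit complex $\varDelta^N$.

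The two topological inputs are as follows. First, Smith theory: if $P$ is a $p$-group acting simplicially on a $\mathbb{Z}_p$-acyclic complex $Y$, then the fixed complex $Y^P$ is again $\mathbb{Z}_p$-acyclic; in particular $Y^P$ is nonempty, $\mathbb{Q}$-acyclic, and $\chi(Y^P)=1$. Second, for a finite-order simplicial automorphism $g$ acting regularly on a finite complex $Y$, the Hopf trace formula gives $\chi(Y^{\langle g\rangle})=L(g)=\sum_i(-1)^i\operatorname{tr}\big(g_*\mid H_i(Y;\mathbb{Q})\big)$, because the only simplices contributing to $\operatorname{tr}(g\mid C_i)$ are those fixed by $g$, and these are exactly the simplices of the fixed complex. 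When $Y$ is $\mathbb{Q}$-acyclic, $g$ fixes the generator of $H_0(Y;\mathbb{Q})=\mathbb{Q}$ and all higher rational homology vanishes, so $L(g)=1$ and hence $\chi(Y^{\langle g\rangle})=1$.

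With these in hand the three cases fall out by peeling the normal tower. If $G=\langle g\rangle$ is cyclic, then $\varDelta$ is $\mathbb{Q}$-acyclic and the trace formula gives $\chi(\varDelta^G)=\chi(\varDelta^{\langle g\rangle})=1$ at once. If $G\in\Psi_p$ with $P\vartriangleleft G$ a $p$-group and $G/P=\langle\bar g\rangle$ cyclic, I first apply Smith theory to conclude that $\varDelta^P$ is $\mathbb{Q}$-acyclic, and then apply the trace formula to the cyclic action of $\bar g$ on $\varDelta^P$ to obtain $\chi(\varDelta^G)=\chi\big((\varDelta^P)^{\bar g}\big)=1$. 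If $G\in\Psi_p^q$ with $P\vartriangleleft H\vartriangleleft G$, I run these same two steps on $P$ and $H/P$ to get $\chi(\varDelta^H)=1$, and then use that $G/H$ is a $q$-group: grouping the simplices of $\varDelta^H$ into $G/H$-orbits, every non-fixed orbit has cardinality divisible by $q$, so $\chi(\varDelta^H)\equiv\chi\big((\varDelta^H)^{G/H}\big)=\chi(\varDelta^G)\pmod q$, and since $\chi(\varDelta^H)=1$ this yields $\chi(\varDelta^G)\equiv1\pmod q$.

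The main obstacle I anticipate is the bookkeeping needed to apply these topological tools to the orbit description of $\varDelta^G$: one must check that passing to $\varDelta^N$ preserves the hypotheses of the next step — that $\varDelta^P$ inherits enough acyclicity for both Smith theory and the trace formula, and that the induced $G/N$-action on $\varDelta^N$ is again a regular simplicial action, so that fixed simplices coincide with the simplices of the fixed complex. Arranging regularity, either by a barycentric subdivision or by working directly with the orbit complex as defined in the statement, together with the universal-coefficient passage from $\mathbb{Z}_p$- to $\mathbb{Q}$-acyclicity, are the steps demanding the most care; once they are in place, Smith theory, the Hopf trace formula, and the orbit-parity count combine mechanically to give the three claimed conclusions.
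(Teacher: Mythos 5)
Your proposal is correct, and it reconstructs exactly the standard proof of this theorem from the cited reference \cite{oliver1975fixed}: Smith theory to show $\varDelta^P$ is $\mathbb{Z}_p$-acyclic (hence, by universal coefficients, $\mathbb{Q}$-acyclic), the Hopf trace formula applied to the cyclic quotient acting on $\varDelta^P$ to get $\chi(\varDelta^H)=1$, and the orbit-counting congruence for the $q$-group $G/H$, all glued by the identification $\varDelta^G=(\varDelta^N)^{G/N}$ and the homeomorphism between $|\varDelta^G|$ and the fixed-point set $|\varDelta|^G$ (with regularity arranged by barycentric subdivision). Note that the paper itself offers no proof of this statement — it is imported as a black box from Oliver — so your argument fills in precisely what the source proves, with no essential gap beyond the bookkeeping you already flagged.
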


We call the groups in  $\Psi_p^q$ and $\Psi_p$ as Oliver groups. The following result directly follows from Theorems \ref{theorem:Kahn} and \ref{theorem:Oliver}, 

\begin{theorem}
\label{theorem:topo}
For a monotone non-increasing $G$-invariant boolean function $f$, if $G$ is transitive, and,  $G$ is cyclic or $G$ is an Oliver group, then $f$ is elusive or trivial.
\end{theorem}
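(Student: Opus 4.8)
The plan is to argue by contradiction, letting the transitivity of $G$ collapse the orbit structure so that the two cited theorems clash. Suppose, for contradiction, that $f$ is neither trivial nor elusive. Since $f$ is not elusive (equivalently, not evasive), Theorem~\ref{theorem:Kahn} shows that $\varDelta_f$ is collapsible; and since $f$ is $G$-invariant, $G$ acts on $\varDelta_f$ as a group of automorphisms. We are therefore exactly in the hypotheses of Theorem~\ref{theorem:Oliver}, and the strategy is to compute $\chi(\varDelta_f^G)$ by hand and show it contradicts the value (or congruence) that Oliver's theorem forces.

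The key step is to identify $\varDelta_f^G$. Because $G$ is transitive on $\{1,\dots,n\}$, it has a single orbit on the vertex set, namely $H_1=\{x_1,\dots,x_n\}$ itself. Consequently the only possible nonempty face of $\varDelta_f^G$ is $\{H_1\}$, and by the definition in Theorem~\ref{theorem:Oliver} we have $\{H_1\}\in\varDelta_f^G$ if and only if $\{x_1,\dots,x_n\}\in\varDelta_f$, i.e.\ $f(\{x_1,\dots,x_n\})=1$. Now monotonicity enters: since $f$ is monotone non-increasing, $f(\{x_1,\dots,x_n\})=1$ would force $f\equiv 1$, contradicting the assumption that $f$ is nontrivial. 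Hence $\{x_1,\dots,x_n\}\notin\varDelta_f$, so $\varDelta_f^G=\{\emptyset\}$, and the Euler-characteristic formula~\eqref{eq:euler} gives $\chi(\varDelta_f^G)=0$.

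It then remains to read off the contradiction from Theorem~\ref{theorem:Oliver}. If $G$ is cyclic or $G\in\Psi_p$, the theorem asserts $\chi(\varDelta_f^G)=1$, which is incompatible with $\chi(\varDelta_f^G)=0$; if $G\in\Psi_p^{q}$, it asserts $\chi(\varDelta_f^G)\equiv 1\pmod q$, which fails for $\chi(\varDelta_f^G)=0$ because $q\geq 2$. In every admissible case the assumption collapses, so $f$ must be elusive or trivial. I do not expect a serious obstacle here: the single substantive idea is recognizing that transitivity reduces $\varDelta_f^G$ to the empty complex, after which Theorems~\ref{theorem:Kahn} and~\ref{theorem:Oliver} supply everything. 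The only point that deserves to be written carefully is the monotonicity argument explaining why a nontrivial non-increasing $f$ cannot accept the full input.
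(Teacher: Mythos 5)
Your proof is correct and takes essentially the same approach as the paper: both use transitivity to reduce $\varDelta_f^G$ to at most the single face given by the unique orbit, then invoke monotonicity to show a nontrivial $f$ forces $\varDelta_f^G=\{\emptyset\}$, contradicting the Euler-characteristic conclusion of Theorem~\ref{theorem:Oliver}. If anything, your write-up is slightly more careful than the paper's, since you explicitly compute $\chi(\varDelta_f^G)=0$ and check both the cyclic/$\Psi_p$ case and the $\Psi_p^q$ case (using $q\geq 2$), where the paper compresses this into the assertion that $\varDelta_f^G$ must be ``non-empty.''
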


\begin{proof}
If $f$ is not evasive, then $\varDelta_f$ is collapsible and thus, by Theorem \ref{theorem:Oliver}, $\varDelta^G_f$ is non-empty. Since $G$ is transitive, that $\varDelta^G_f$ is non-empty implies that the only orbit of $G$ is in $\varDelta^G_f$, which means $\{x_1,...,x_n\} \in \varDelta_f$ and therefore $f$ is trivial.
\end{proof}

\begin{table}[h]
\caption{Minimal transitive permutation groups of degree 14 \label{table:list}}
{\begin{tabular}{ | p{1cm} | p{1.0cm} | p{7cm} |p{1cm}|}
\hline 
Group &index & Generators& Order\\ 
\hline 
$G_1$ &(1)& (1, 2, 3, 4, 5, 6, 7, 8, 9, 10, 11, 12, 13, 14)& 14\\ 
\hline
$G_2$ &(2)& (1, 3, 5, 7, 9, 11, 13)(2, 4, 6, 8, 10, 12, 14), \newline
			(1, 12)(2, 11)(3, 10)(4, 9)(5, 8)(6, 7)(13, 14)& 14 \\ 
\hline
$G_3$ &(6)& (3, 10)(5, 12)(6, 13)(7, 14), \newline
			(1, 3, 4, 7, 9, 11, 13)(2, 4, 6, 8, 10, 12, 14)& 56\\
\hline
$G_4$ &(12)&  (1,3,5,7,9,11,13), \newline 
				(1,2)(3,14,13,4)(5,12,11,6)(7,10,9,8),\newline
	(1, 6, 13, 8)(2, 9, 12, 5)(3, 4, 11, 10)(7, 14) & 169\\
\hline
$G_5$ &(30)&  (2, 4, 6, 8, 10, 12, 14), (2, 4, 8)(6, 12, 10), \newline
				(1, 8)(2, 5)(3, 4)(6, 13)(7, 14)(9, 12)(10, 11)& 1092\\
\hline
$G_6$ &(10)& (1, 5, 11, 10)(2, 9)(3, 8, 12, 4)(6, 14, 13, 7), \newline
				(1, 9, 5, 14)(2, 12, 7, 8)(3, 4, 10, 11)(6, 13)& 168\\

\hline
\end{tabular}}
{*The second column shows the index of each group in the GAP system.}
\end{table}
\section{Main result}
In this paper, we show the following result.

\begin{theorem}
\label{theorem:main}
Every nontrivial monotone non-increasing weakly symmetric boolean function of 14 variables is elusive. 
\end{theorem}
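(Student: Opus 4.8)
The plan is to reduce to the minimal transitive groups of degree $14$ and then dispatch each one, the two non-solvable cases being where the real work lies. Since a weakly symmetric $f$ is by definition $G$-invariant for some transitive $G$, and since any transitive subgroup of $G$ also leaves $f$ invariant, it suffices to treat the case where $G$ is a \emph{minimal} transitive group, i.e.\ one all of whose proper subgroups are intransitive. A finite check (the GAP indices recorded in Table~\ref{table:list}) shows there are exactly six such groups $G_1,\dots,G_6$, so the theorem follows once each $G_i$ is handled.

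For $G_1,\dots,G_4$ I would simply verify that each is cyclic or an Oliver group and invoke Theorem~\ref{theorem:topo}. Here $G_1\cong C_{14}$ is cyclic and $G_2\cong D_7=C_7\rtimes C_2$ lies in $\Psi_7$. For $G_3$ (order $56=2^3\cdot 7$) a Sylow count shows that either the Sylow $7$-subgroup is normal, placing $G_3$ in $\Psi_7^2$, or the Sylow $2$-subgroup is normal, placing $G_3$ in $\Psi_2$; either way $G_3$ is an Oliver group, and the same style of argument (a normal Sylow subgroup with cyclic or prime-power quotient) handles $G_4$. Note that for transitive $G$ the complex $\varDelta_f^{G}$ has a single vertex, the unique orbit of $G$ on $\{x_1,\dots,x_{14}\}$, so $\chi(\varDelta_f^{G})\in\{0,1\}$; hence even the congruence $\chi\equiv 1 \pmod q$ coming from the $\Psi_p^q$ case of Theorem~\ref{theorem:Oliver} already forces $\chi=1$, and Theorem~\ref{theorem:topo} applies uniformly to conclude that $f$ is elusive or trivial.

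The two remaining groups are $G_6\cong\mathrm{PSL}(2,7)\cong\mathrm{GL}(3,2)$ of order $168$ (acting imprimitively, as seven blocks of size $2$ arising from the Fano geometry) and $G_5\cong\mathrm{PSL}(2,13)$ of order $1092$ (acting $2$-transitively on the fourteen points of $\mathbb{P}^1(\mathbb{F}_{13})$). Both are simple, hence lie in no $\Psi_p$ and no $\Psi_p^q$, so Theorem~\ref{theorem:topo} is unavailable and these are the crux. For each such $G$ I would argue by contradiction: if $f$ is $G$-invariant and not evasive, then by Theorem~\ref{theorem:Kahn} $\varDelta_f$ is $Z_p$-acyclic for every prime $p$, so $\chi(\varDelta_f)=1$, and Theorem~\ref{theorem:Oliver} applied to each Oliver \emph{subgroup} $H<G$ yields $\chi(\varDelta_f^{H})=1$ (when $H\in\Psi_p$) or $\chi(\varDelta_f^{H})\equiv 1\pmod q$ (when $H\in\Psi_p^q$). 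I would run these fixed-point identities over a family of subgroups whose orbits on the fourteen points are explicitly known: for $G_6$ the cyclic $C_7$ (two orbits of size $7$), the Frobenius group $C_7\rtimes C_3\in\Psi_7$, a Sylow $2$-subgroup, and the point stabiliser $A_4\in\Psi_2$; for $G_5$ the torus normalisers, the Borel subgroup $C_{13}\rtimes C_6\in\Psi_{13}$, and the Sylow subgroups. Since $\varDelta_f^{H}$ is assembled from the $H$-orbits, each such identity becomes a linear relation among the $0/1$ indicators ``the $G$-orbit of $k$-subsets is contained in $\varDelta_f$''. Combining these relations with the downward closure (monotonicity) of $\varDelta_f$ and with $\chi(\varDelta_f)=1$, the aim is to force every indicator to $1$, so that $\{x_1,\dots,x_{14}\}\in\varDelta_f$ and $f$ is trivial.

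The main obstacle is precisely this last step for $G_5$ and $G_6$: because neither is an Oliver group, no single application of Theorem~\ref{theorem:Oliver} closes the argument, and one must instead assemble and solve the entire system of orbit-inclusion constraints coming from many intransitive Oliver subgroups. This requires explicit control of the $G$-orbits on $k$-subsets of the fourteen points---their number, sizes, and incidence pattern---which for $\mathrm{PSL}(2,13)$ can be organised using $2$-transitivity and for $\mathrm{PSL}(2,7)$ using the block (Fano point/line) structure. Verifying that this constraint system admits only the trivial all-ones solution for every $k$ is the technical heart of the proof, and the step most likely to demand a careful, possibly computer-assisted, case analysis.
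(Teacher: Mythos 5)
Your reduction to the six minimal transitive groups, your handling of $G_1$--$G_4$ (cyclic or Oliver, then Theorem~\ref{theorem:topo}), and your general philosophy for $G_6$ all match the paper; but your plan diverges from it exactly at the two hard groups, and in both places there is a problem. For $G_5$, you propose to run the same heavy Oliver-subgroup constraint computation as for $G_6$, and you yourself flag its success as uncertain. The paper instead closes $G_5$ with a short, computation-free argument that you miss: since $14=13+1$ and $|G_5|=1092=13\cdot 84$ with $13\nmid 84$, a Sylow $13$-subgroup $H$ is cyclic of order $13$; acting on $14$ points its orbits have size $1$ or $13$, so $H$ fixes exactly one variable $x_a$ and is transitive on the remaining $13$. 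Hence $f_{x_a=1}$ is invariant under a transitive cyclic group and is elusive by Theorem~\ref{theorem:topo} (the degenerate constant cases are easily dispatched), and Lemma~\ref{lemma:f_x=1} --- using transitivity of $G_5$ to relabel any decision tree so that its root is $x_a$ --- lifts elusiveness from $f_{x_a=1}$ to $f$. This is Lemma~\ref{lemma:new} of the paper; your route for $G_5$ replaces a one-paragraph proof by a large search whose outcome you cannot guarantee in advance (note $\mathrm{PSL}(2,13)$ has far more orbits on $k$-subsets than $G_6$ does, so this is also the computationally worse case).

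The second gap is the one that would actually make your argument fail as stated: the constraint system you describe for $G_6$ --- Oliver fixed-point identities $\chi(\varDelta_f^{H})=1$ or $\equiv 1\ (\mathrm{mod}\ q)$ over Oliver subgroups $H<G_6$, monotonicity, and $\chi(\varDelta_f)=1$ --- is demonstrably insufficient to force $\varDelta_f$ to be a simplex. The paper's own case analysis (Appendix B, Step 7) exhibits a branch in which \emph{all} of its eleven subgroup conditions together with $\chi(\varDelta_f)=1$ are satisfied by non-simplex orbit assignments (two of them survive), and these survivors are eliminated only by a constraint of a different kind: since $\text{Link}(\varDelta_f,x_1)=\varDelta_{f_{x_1=1}}$ and non-evasiveness of $f$ implies non-evasiveness of $f_{x_1=1}$, Theorem~\ref{theorem:Kahn} forces $\chi(\varDelta_{f_{x_1=1}})=1$ as well. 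This link/restriction condition is the second bullet of Theorem~\ref{theorem:G_6} and is indispensable; your proposal never invokes it. (Your point-stabilizer condition $\chi(\varDelta_f^{A_4})=1$ is a statement about the fixed-point complex of $A_4$, not about the link of a vertex, and is strictly weaker.) Since your constraints are only necessary conditions for non-evasiveness, ending the search with surviving candidates proves nothing --- the argument would simply be inconclusive, for $G_6$ and a fortiori for $G_5$.
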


According to \cite{hulpke2005constructing}, there are totally 63 transitive groups of degree $14$ up to permutation isomorphism, where there are 6 minimal transitive groups shown in Table \ref{table:list}. These groups can be found in GAP system (\cite{gap2012gap}). Let $G_i$ , $1 \leq i \leq 6$, be the $i^{th}$ minimal transitive group. Therefore, any weakly symmetric boolean function with 14 variables must be invariant under at least one of the groups of $G_i$. Thus, to prove Theorem \ref{theorem:main}, it suffices to show that every nontrivial $G_i$-invariant monotone non-increasing boolean function is elusive. In the following, we will show that the first four groups are either cyclic or Oliver groups, which can be handled by Theorem \ref{theorem:main}, while the last two groups are neither cyclic nor Oliver groups for which we propose new techniques. In the rest of this section, $G_1$, $G_2$, $G_3$ and $G_4$ will be considered in Sec. \ref{subsec:g_1-4}, and, $G_5$ and $G_6$ will be discussed in Sec. \ref{subsec:g_5-6}, respectively.

\subsection{$G_1$, $G_2$, $G_3$ and $G_4$ }
\label{subsec:g_1-4}
\begin{lemma}
Every non-trivial monotone non-increasing $G_1$-invariant boolean function $f$ is elusive.
\end{lemma}
\begin{proof}
Since $G_1$ is cyclic, the lemma directly follows from Theorem \ref{theorem:topo}.
\end{proof}

\begin{lemma}
Every non-trivial monotone non-increasing $G_2$-invariant boolean function $f$ is elusive.
\end{lemma}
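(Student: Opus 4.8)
The plan is to invoke Theorem~\ref{theorem:topo} again, but first I must verify that $G_2$ falls into one of the categories that theorem can handle, namely cyclic or Oliver groups. The group $G_2$ has order $14$ and is given by two generators: a product of two $7$-cycles $(1,3,5,7,9,11,13)(2,4,6,8,10,12,14)$, which is an element $a$ of order $7$, and an involution $b=(1,12)(2,11)\cdots(13,14)$ of order $2$. Since $|G_2|=14=2\cdot 7$, the natural first step is to determine the structure of $G_2$ from these generators. I would compute the conjugation action of $b$ on $a$ (i.e.\ $bab^{-1}$) to decide whether $G_2$ is abelian (hence cyclic $\cong \mathbb{Z}_{14}$, since a group of order $14$ that is abelian must be cyclic) or the dihedral group $D_7$ of order $14$.

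If $G_2$ turns out to be cyclic, we are done immediately by Theorem~\ref{theorem:topo}, exactly as for $G_1$. If instead $G_2\cong D_7$, I would argue that it still qualifies as an Oliver group in the class $\Psi_p$. Here the $7$-cycle generates a normal subgroup $P=\langle a\rangle$ of order $7$ (a $p$-subgroup with $p=7$), since a subgroup of index $2$ is always normal. The quotient $G_2/P$ then has order $2$ and is therefore cyclic. By the definition of $\Psi_p$ (a finite group with a normal $p$-subgroup $P\vartriangleleft G$ such that $G/P$ is cyclic), this places $G_2$ in $\Psi_7\subseteq$ Oliver groups. Since $G_2$ is transitive (the two $7$-cycles together move all $14$ points and the involution connects the two orbits), Theorem~\ref{theorem:topo} then yields that every nontrivial monotone non-increasing $G_2$-invariant boolean function is elusive.

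The only genuine verification needed, and the step most prone to error, is confirming normality of $P=\langle a\rangle$ and the cyclicity of the quotient, which reduces to checking that $\langle a\rangle$ has index $2$ in $G_2$ and that $G_2$ is generated by $a$ and $b$ with $b^2=1$. This is routine: $|G_2|=14$ is given in Table~\ref{table:list}, and $|\langle a\rangle|=7$, so $[G_2:\langle a\rangle]=2$, forcing normality. I expect no substantive obstacle here; the entire argument is a short structural classification feeding into Theorem~\ref{theorem:topo}, and the proof should be only a few lines, mirroring the treatment of $G_1$.
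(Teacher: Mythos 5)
Your proposal is correct and follows essentially the same route as the paper: identify $P=\langle a\rangle$ as a normal $7$-subgroup of index $2$ (the paper verifies normality via the relation $bab=a^{6}$, you via the index-$2$ argument), conclude $G_2\in\Psi_7$ with cyclic quotient, and apply Theorem~\ref{theorem:topo}. Your observation that the cyclic-versus-dihedral distinction is immaterial (the $\Psi_7$ argument covers both cases) is a harmless slight streamlining of the same idea.
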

\begin{proof}
Let 
\begin{equation*}
a=(1, 3, 5, 7, 9, 11, 13)(2, 4, 6, 8, 10, 12, 14)
\end{equation*}
and
\begin{equation*}
b=(1, 12)(2, 11)(3, 10)(4, 9)(5, 8)(6, 7)(13, 14).
\end{equation*} 
$G_2=<a,b>$.
Let $P=<a>$ be the subgroup of $G_2$ generated by $a$. Since $bab=a^6$, $P$ has an index of $2$. Therefore, $P$ is a normal 7-subgroup and $G/P$ is a cyclic group. Thus, $G_2 \in \Psi_7$. By Theorem \ref{theorem:main}, every $G_2$-invariant monotone boolean function $f$  is elusive.
\end{proof}

\begin{lemma}
Every non-trivial monotone non-increasing $G_3$-invariant boolean function $f$ is elusive.
\end{lemma}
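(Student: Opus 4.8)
The plan mirrors the strategy already used for $G_2$: I would try to exhibit a normal series inside $G_3$ that certifies membership in one of the Oliver classes $\Psi_p$ or $\Psi_p^q$, so that Theorem~\ref{theorem:topo} applies directly. First I would record the two generators, say
\begin{equation*}
c=(3, 10)(5, 12)(6, 13)(7, 14), \qquad d=(1, 3, 4, 7, 9, 11, 13)(2, 4, 6, 8, 10, 12, 14),
\end{equation*}
and note that $|G_3|=56=2^3\cdot 7$. Because $7$ divides the order and $7^2$ does not, a Sylow $7$-subgroup $P$ has order $7$; the natural candidate is the cyclic group generated by the $7$-cycle hidden in $d$. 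The key algebraic fact I would verify is that this $P$ is \emph{normal} in $G_3$: by Sylow's theorems the number of Sylow $7$-subgroups is congruent to $1\pmod 7$ and divides $8$, so it is either $1$ or $8$, and I would check by direct conjugation of the generator that it is $1$, i.e.\ $c$ normalizes $P$.

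Once $P \vartriangleleft G_3$ is a normal $7$-subgroup of $p$-power order, the quotient $G_3/P$ has order $8=2^3$, hence is a $2$-group. This immediately gives $G_3 \in \Psi_7^{2}$ with the trivial choice $H=P$ (so that $P \vartriangleleft H \vartriangleleft G_3$, $P$ is of $7$-power order, $H/P$ is the trivial cyclic group, and $G_3/H = G_3/P$ has $2$-power order). With $G_3$ thus identified as an Oliver group and already transitive on $\{1,\dots,14\}$, Theorem~\ref{theorem:topo} yields that every nontrivial monotone non-increasing $G_3$-invariant boolean function is elusive. Alternatively, if $G_3/P$ turns out to be cyclic (which it need not be for a general group of order $8$), one could even place $G_3$ in $\Psi_7$; but the weaker $\Psi_7^2$ conclusion already suffices.

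The only real obstacle is the normality check for $P$: everything else is bookkeeping about group orders. I would carry it out by conjugating the chosen generator of $P$ by $c$ and confirming the result is again a power of that generator (equivalently, confirming that the Sylow count is $1$ rather than $8$), which is a short finite verification analogous to the identity $bab=a^6$ used in the $G_2$ case. If that conjugation identity holds, the rest of the argument is immediate from Theorem~\ref{theorem:topo}.
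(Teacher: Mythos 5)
There is a genuine gap, and it sits precisely at the step you yourself flagged as the only real obstacle: the Sylow $7$-subgroup of $G_3$ is \emph{not} normal, so the conjugation check you propose fails. Take the corrected second generator $d=(1,3,5,7,9,11,13)(2,4,6,8,10,12,14)$ (as printed in the table it is ill-formed, with $4$ occurring twice) and $c=(3,10)(5,12)(6,13)(7,14)$, and set $P=\langle d\rangle$. Conjugation relabels the entries of the cycles of $d$, so
\begin{equation*}
c\,d\,c^{-1}=(1,10,12,14,9,11,6)\,(2,4,13,8,3,5,7).
\end{equation*}
Every power of $d$ maps odd-numbered points to odd-numbered points, while $c\,d\,c^{-1}$ sends $1\mapsto 10$; hence $c\,d\,c^{-1}\notin P$ and $c$ does not normalize $P$, i.e.\ the Sylow count is $n_7=8$, not $1$. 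This is not bad luck with the choice of generators but is forced by the structure of $G_3$: the group contains the normal subgroup $V=\langle a,b,c\rangle\cong\mathbb{Z}_2^3$ of index $7$ used in the paper's proof (the three listed involutions commute pairwise), and if a Sylow $7$-subgroup were also normal then $G_3\cong V\times\mathbb{Z}_7$ would be abelian; a transitive abelian permutation group is regular, so its order would equal the degree $14$, contradicting $|G_3|=56$. Thus the $7$-element acts nontrivially, hence fixed-point-freely, on $V$, and $G_3$ is the Frobenius group $\mathbb{Z}_2^3\rtimes\mathbb{Z}_7$ with eight Sylow $7$-subgroups.

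The repair is to run your argument the other way around, which is exactly what the paper does: in a group of order $56$ whose Sylow $7$-subgroup is not normal, the Sylow $2$-subgroup is normal, and here it is the elementary abelian $V=\langle a,b,c\rangle$ with $a=(2,9)(4,11)(5,12)(6,13)$, $b=(1,8)(2,9)(5,12)(7,14)$, $c=(3,10)(5,12)(6,13)(7,14)$. Then $V$ is a normal $2$-subgroup with $G_3/V$ cyclic of order $7$, so $G_3\in\Psi_2$, and Theorem~\ref{theorem:topo} yields the lemma. Your bookkeeping about the Oliver classes (taking $H=P$, trivial cyclic quotient $H/P$, and a $2$-power quotient $G_3/H$) is internally consistent, but it hangs on a normality claim that is false for this particular group, so the proof as proposed cannot be completed.
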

Let 
\[a=(2,9)(4,11)(5,12)(6,13),~b=(1,8)(2,9)(5,12)(7,14)\] and\[c=(3,10)(5,12)(6,13)(7,14).\] One can check that group $<a, b, c>$ is a normal 2-subgroup of $G_3$ of index $7$. Therefore $G_3 \in \Psi_2$ and by Theorem \ref{theorem:main}, every nontrivial $G_3$-invariant monotone boolean function is elusive.

\begin{lemma}
Every non-trivial $G_4$-invariant monotone non-increasing boolean function $f$ is elusive.
\end{lemma}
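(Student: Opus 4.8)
The plan is to handle $G_4$ exactly as $G_2$ and $G_3$ were handled: exhibit it as an Oliver group so that Theorem~\ref{theorem:topo} applies directly. Write $a=(1,3,5,7,9,11,13)$, $b=(1,2)(3,14,13,4)(5,12,11,6)(7,10,9,8)$ and $c=(1,6,13,8)(2,9,12,5)(3,4,11,10)(7,14)$, so that $G_4=\langle a,b,c\rangle$, with $a$ of order $7$ and $b,c$ of order $4$. Because the $7$-cycle $a$ lies in $G_4$, Lagrange forces $7\mid |G_4|$; combined with the fact that $b$ and $c$ are $2$-elements, this points to a factorization of $|G_4|$ into a $2$-part and a single factor of $7$. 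The natural candidate is therefore a normal Sylow $2$-subgroup $P$ with cyclic quotient of order $7$, which would place $G_4$ in the class $\Psi_2$, mirroring the argument for $G_3$.

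Concretely, I would proceed in three steps. First, fix $P$ to be a Sylow $2$-subgroup of $G_4$ containing the $2$-group $\langle b,c\rangle$, so that $\langle P,a\rangle=G_4$. Second, verify that $P$ is normal in $G_4$; since $G_4=\langle P,a\rangle$ and $P$ is trivially invariant under its own conjugation, it suffices to check that $aPa^{-1}=P$, i.e.\ that the $a$-conjugates of the generators of $P$ land back inside $P$. Third, compute the index $[G_4:P]$, confirm it equals $7$, and check that $G_4/P=\langle aP\rangle\cong\mathbb{Z}_7$ is cyclic. These three facts give $P\vartriangleleft G_4$ with $P$ a $2$-group and $G_4/P$ cyclic, hence $G_4\in\Psi_2$, and Theorem~\ref{theorem:topo} then yields that every nontrivial $G_4$-invariant monotone non-increasing boolean function is elusive. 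All of these are finite checks and are most safely carried out in the GAP system.

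The main obstacle is the normality verification in the second step: unlike the cyclic case $G_1$ or the clean relation $bab=a^6$ used for $G_2$, here the conjugation $a(\cdot)a^{-1}$ acts on the tangled product structure of $b$ and $c$, and one must confirm that the resulting permutations stay inside the $2$-group $P$ rather than enlarging it. Should the Sylow $2$-subgroup fail to be normal (for instance if Sylow theory leaves open $n_2=7$), the fallback is to search for the weaker $\Psi_p^q$ structure: a normal series $P\vartriangleleft H\vartriangleleft G_4$ with $P$ a $2$-group, $H/P$ cyclic, and $G_4/H$ of $7$-power order, which by Theorem~\ref{theorem:topo} still forces $\chi(\varDelta_f^{G_4})\equiv 1\pmod 7$ and hence $\varDelta_f^{G_4}\neq\emptyset$. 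Either route reduces elusiveness to the non-vanishing of this Euler characteristic, so the entire burden is the group-theoretic bookkeeping identifying the correct normal series.
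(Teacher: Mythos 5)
Your overall strategy---realize $G_4$ as an Oliver group and invoke Theorem~\ref{theorem:topo}---is exactly the paper's, but the specific normal structure you propose does not exist, and the gap cannot be repaired within your plan because you have assigned the two primes to the wrong roles. The decisive miscalculation is the order of $G_4$: it is not of the form $2^k\cdot 7$ but equals $196=2^2\cdot 7^2$ (the table entry ``169'' is a typo). Indeed, besides $a=(1,3,5,7,9,11,13)$, the group contains the conjugate $bab^{-1}=(2,14,12,10,8,6,4)$, a second $7$-cycle supported on the even points and commuting with $a$; hence $P_7:=\langle a,\,bab^{-1}\rangle\cong C_7\times C_7$ lies in $G_4$ and $49\mid|G_4|$. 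Consequently a Sylow $2$-subgroup has order $4$ and index $49$, never $7$, so your Step 3 can never be completed. (Your Step 1 is also flawed: $\langle b,c\rangle$ is not a $2$-group, since $b$ and $c$ induce the same conjugation action on $P_7$ and therefore $b^{-1}c$ is a nontrivial element of $P_7$.) Worse, $G_4$ has no nontrivial normal $2$-subgroup at all: $P_7$ is the unique, hence normal, Sylow $7$-subgroup; its centralizer is $P_7$ itself (any element commuting with $P_7$ must preserve each of the blocks $\{1,3,\dots,13\}$ and $\{2,4,\dots,14\}$ and restrict on each block to a power of the regular $7$-cycle there); and any normal $2$-subgroup $N$ satisfies $[N,P_7]\le N\cap P_7=1$, forcing $N\le C_{G_4}(P_7)=P_7$ and hence $N=1$. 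This kills not only $\Psi_2$ but also your fallback $\Psi_2^7$: there you would need $H\vartriangleleft G_4$ of $7$-power index, so $|H|\in\{4,28,196\}$; order $4$ is a nontrivial normal $2$-subgroup (impossible), order $28$ would force a normal subgroup of order $7$, i.e.\ a $G_4$-invariant line in $P_7\cong\mathbb{F}_7^2$, which does not exist because $b$ acts on $P_7$ with characteristic polynomial $\lambda^2+1$, irreducible mod $7$; and $H=G_4$ would require $G_4$ to be cyclic with $P$ trivial, which it is not.

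The repair---and the paper's actual proof---swaps the primes: take $p=7$, $q=2$. The paper sets $P=\langle(1,3,5,7,9,11,13),\,(2,4,6,8,10,12,14)\rangle$ of order $49$ and $H=\langle P,\,(3,13)(4,14)(5,11)(6,12)(7,9)(8,10)\rangle$ of order $98$; then $P\vartriangleleft H\vartriangleleft G_4$, $P$ is a $7$-group, $H/P\cong C_2$ is cyclic, and $G_4/H$ has order $2$, so $G_4\in\Psi_7^2$ and Theorem~\ref{theorem:topo} yields the lemma. (In fact, since $b$ and $c$ induce the same automorphism of $P_7$, the quotient $G_4/P_7$ is cyclic of order $4$, so even the simpler conclusion $G_4\in\Psi_7$ holds.) The lesson is that for $G_4$ the normal ``$p$-part'' must be the $7$-part, not the $2$-part; the finite checks you propose to run in GAP would have revealed this immediately, but as written every step of your plan that mentions a normal $2$-subgroup fails.
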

Let \[a=(3,13)(4,14)(5,11)(6,12)(7,9)(8,10),~b=(1,3,5,7,9,11,13)\] and \[c= (2,4,6,8,10,12,14).\]
Group $H=<a,c,b>$ is a normal subgroup of $G_4$. Because $|H|=98$ and $|G_5|=196$, $G/H$ is of order $2$. Let \[d=(1,3,5,7,9,11,13)\] and \[e=
(2,4,6,8,10,12,14).\] Group $P=<d,e>$ is a normal subgroup of $H$. Because $|H|=98$ and $|P|=49$, $P$ is of 7-power and $H/P$ is cyclic. Therefore, $G_5 \in \Psi_7^2$, and thus, by Theorem \ref{theorem:main}, proved.

\subsection{$G_5$ and $G_6$}
\label{subsec:g_5-6}
In this section we consider $G_5$ and $G_6$. Note that $G_5$ and $G_6$ are not cyclic and furthermore they are not solvable. Thus, the existing technique can not be applied to prove the evasiveness of an $G_5$ or $G_6$-invariant monotone boolean function. In the following, we proceed in another approach.

The following result is well-known and intuitive.
\begin{lemma}
\label{lemma:f_x=1}
For a $G$-invariant boolean function $f(x_1,...,x_n)$ where $G$ is transitive, if, for some $x_a \in \{x_1,...,x_n\}$, $f_{x_a=1}$ is elusive, then $f$ is elusive. 
\end{lemma}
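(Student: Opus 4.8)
The plan is to analyze what happens to the decision tree complexity when we fix one variable to $1$. Let me set up the key observation: if $f$ is $G$-invariant with $G$ transitive, then fixing any single variable $x_a=1$ yields essentially the same function (up to relabeling) regardless of which $a$ we pick, because transitivity lets us map any variable to any other via some $\sigma \in G$. So the restricted function $f_{x_a=1}$ on the remaining $n-1$ variables does not depend (up to isomorphism) on the choice of $a$, and its elusiveness is a well-defined property.

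First I would recall the general inequality relating the decision tree complexity of $f$ to that of its restriction. For any boolean function $f$ and any variable $x_a$, one always has
\begin{equation*}
D(f) \geq D(f_{x_a=1}) + 1.
\end{equation*}
The reason is that, given any decision tree $T$ for $f$, one can build a decision tree for $f_{x_a=1}$ by taking $T$, going to the left child whenever $x_a$ is queried (since $x_a=1$ is fixed), and deleting the $x_a$-node; this produces a tree computing $f_{x_a=1}$ whose depth is at most $D(f)-1$, provided we can guarantee that $x_a$ is queried along the deepest surviving path. The cleanest way to secure the ``$+1$'' is to argue contrapositively: I would take an optimal decision tree $T$ for $f$ of depth $D(f)$ and, using $G$-transitivity, assume without loss of generality that $x_a$ is the variable queried at the root (if the root queries some other variable $x_b$, apply the automorphism $\sigma$ sending $b$ to $a$ to relabel the tree, which preserves depth since $f$ is $\sigma$-invariant). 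Then the left subtree of the root is a decision tree for $f_{x_a=1}$ on the remaining $n-1$ variables.

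The main step is then the counting/depth argument. Assume $f_{x_a=1}$ is elusive, so $D(f_{x_a=1}) = n-1$ on its $n-1$ free variables. With $x_a$ queried at the root of the optimal tree $T$ for $f$, the left subtree has depth at least $D(f_{x_a=1}) = n-1$, whence $T$ has depth at least $1 + (n-1) = n$. Since trivially $D(f) \leq n$, we conclude $D(f) = n$, i.e., $f$ is elusive. The only subtlety to handle carefully is the reduction putting $x_a$ at the root: I must verify that relabeling an optimal tree for $f$ by an automorphism $\sigma \in G$ again computes $f$ (this is exactly $\sigma$-invariance) and has the same depth (relabeling nodes does not change tree shape), so the existence of \emph{some} optimal tree with $x_a$ at the root follows from the existence of \emph{some} optimal tree whatsoever.

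The part I expect to require the most care is not conceptual but bookkeeping: one must be sure that after fixing $x_a=1$ the function $f_{x_a=1}$ really is a function of the remaining $n-1$ variables and that its elusiveness is stated with respect to those $n-1$ variables, so that $D(f_{x_a=1})=n-1$ is the correct meaning of ``elusive'' in the hypothesis. Once that convention is fixed, the transitivity-plus-root-relabeling trick and the depth inequality close the argument immediately, and no heavy machinery (neither the topological Theorems~\ref{theorem:Kahn}--\ref{theorem:Oliver} nor group theory beyond transitivity of $G$) is needed here.
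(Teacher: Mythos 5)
Your proof is correct and takes essentially the same route as the paper: use $G$-invariance plus transitivity to relabel a depth-optimal decision tree so that $x_a$ sits at the root, note that the left subtree is then a decision tree for $f_{x_a=1}$, and conclude from elusiveness of $f_{x_a=1}$ that this subtree has depth $n-1$, forcing $D(f)=n$. One caution: your opening claim that $D(f) \geq D(f_{x_a=1})+1$ holds for \emph{any} boolean function is false as stated (take $f$ not depending on $x_a$), but since your actual argument never invokes it in that generality and instead relies on the root-relabeling step, the proof stands.
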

\begin{proof}
Let $T$ be an arbitrary decision tree of $f$ and $\sigma$ a permutation in $G$. By relabeling the variable $x_i$ in $T$ by $x_{\sigma(i)}$, we obtain another decision tree denoted by $\sigma(T)$. Since the label of leaves remain unchanged, $T$ and $\sigma(T)$ have the same length. Suppose the root of $T$ is $x_b$. Because $G$ is transitive, there is a permutation $\overline{\sigma}$ in $G$ such that $\overline{\sigma}(b)=a$. Therefore, $\overline{\sigma}(T)$ is a decision tree of $f$ with root $x_a$. Let $\overline{\sigma}(T)_a^1$ be the left subtree of $\overline{\sigma}(T)$ on the root $x_a$. Because $f_{x_a=1}$ is elusive and $\overline{\sigma}(T)_a^1$ is a decision tree of $f_{x_a=1}$, the depth of $\overline{\sigma}(T)_a^1$ is $n-1$, which implies $\overline{\sigma}(T)$ is of depth $n$ and so is $T$. Since $T$ is arbitrarily selected, every the decision tree of $f$ has a depth of $n$. 
\end{proof}

One can check that if $f_{x_a=1}$ is elusive for some $x_a$, then $f_{x_i=1}$ is elusive for every $x_i$. Although it is not easy to directly prove the elusiveness of an $G_5$-invariant function $f$, we are able to show that $f_{x_a=1}$ is elusive for some $x_a$.  
\begin{lemma}
\label{lemma:new}
If a nontrivial monotone Boolean function $f$ of $p+1$ variables is invariant under a transitive group $G$ where $p$ is a prime and $|G|=p*k$ where $k$ is an integer and $p$ does not divide $k$, then $f$ is elusive.
\end{lemma}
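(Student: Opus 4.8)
The plan is to reduce the elusiveness of $f$ to that of a restriction fixing one cleverly chosen variable, and then to invoke Theorem~\ref{theorem:topo} for a \emph{cyclic} group rather than for $G$ itself (recall that $G$ is assumed only transitive, not cyclic or Oliver). First I would extract a Sylow $p$-subgroup $P \le G$. Since $|G| = pk$ with $p \nmid k$, we have $|P| = p$, so $P$ is cyclic. The key structural fact is the orbit pattern of $P$ on the $p+1$ points: every $P$-orbit has size $1$ or $p$, so if $t$ is the number of fixed points then $p \mid (p+1-t)$, forcing $t \equiv 1 \pmod p$ and hence $t \in \{1,\,p+1\}$. As $P$ is a nontrivial subgroup of a permutation group it acts nontrivially, ruling out $t = p+1$; thus $P$ fixes exactly one point, say $x_a$, and acts transitively on the remaining $p$ points.

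Next I would pass to the restriction $f_{x_a=1}$, which is a monotone non-increasing Boolean function of the $p$ variables other than $x_a$ (restrictions preserve monotonicity). Because every $\sigma \in P$ fixes $a$, the function $f_{x_a=1}$ is $P$-invariant, and $P$ acts transitively (indeed regularly, being cyclic of prime order) on these $p$ variables. Applying Theorem~\ref{theorem:topo} with the transitive cyclic group $P$ then shows that $f_{x_a=1}$ is either elusive or trivial. In the elusive case we are done immediately: Lemma~\ref{lemma:f_x=1}, applied with the transitive group $G$ on all $p+1$ variables, upgrades the elusiveness of $f_{x_a=1}$ to the elusiveness of $f$.

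It remains to dispose of the case that $f_{x_a=1}$ is trivial, i.e.\ constant. It cannot be identically $1$: that would force $f(\{x_1,\dots,x_n\})=1$, and the monotone non-increasing property would then make every subset a true input, so $f \equiv 1$, contradicting nontriviality. Hence $f_{x_a=1} \equiv 0$, i.e.\ $f(\boldsymbol{x})=0$ whenever $x_a \in \boldsymbol{x}$. Using transitivity of $G$, for any variable $x_b$ choose $\sigma \in G$ with $\sigma(a)=b$; invariance then gives $f(\boldsymbol{x})=0$ whenever $x_b \in \boldsymbol{x}$. As $b$ is arbitrary, $f$ vanishes on every nonempty input, and since $f$ is nontrivial (hence not $\equiv 0$) we must have $f(\emptyset)=1$. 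Thus $f$ is exactly the ``all-zeros detector,'' which is elusive by a direct adversary argument: answering $0$ to every query leaves the value undetermined until the last variable is read.

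The main obstacle I anticipate is not the cyclic reduction itself but handling the trivial branch rigorously. In particular, triviality of $f_{x_a=1}$ does \emph{not} reproduce the triviality conclusion of Theorem~\ref{theorem:topo} for $f$; it must be traced back, via $G$-transitivity, to the degenerate all-zeros function, whose elusiveness has to be argued by hand. Getting the orbit count $t \equiv 1 \pmod p$ and the nontriviality bookkeeping exactly right is where care is needed, but the decisive simplification is the observation that the subgroup fixing $x_a$ already contains a cyclic transitive group $P$ on the remaining variables, so no classification of $G$ as an Oliver group is required.
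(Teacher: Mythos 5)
Your proposal is correct and follows essentially the same route as the paper's proof: extract a Sylow $p$-subgroup $P$, note that it is cyclic of order $p$, fixes exactly one variable $x_a$, and acts transitively on the remaining $p$ variables; then apply Theorem~\ref{theorem:topo} to $f_{x_a=1}$ and lift elusiveness to $f$ via Lemma~\ref{lemma:f_x=1}. The difference is one of rigor, and it is in your favor. The paper asserts without argument that the Sylow subgroup stabilizes some $x_a$ (your orbit count $t \equiv 1 \pmod p$ is the missing justification), and, more substantively, the paper jumps from Theorem~\ref{theorem:topo} straight to ``$f_{x_a=1}$ is elusive,'' silently ignoring the theorem's other alternative, that $f_{x_a=1}$ is trivial. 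Your treatment of that branch is exactly what is needed: $f_{x_a=1}\equiv 1$ contradicts nontriviality of the monotone non-increasing $f$, while $f_{x_a=1}\equiv 0$ together with $G$-transitivity forces $f$ to be the all-zeros detector, which is elusive by the adversary that answers $0$ to every query. So your write-up is a completed version of the paper's argument rather than a different one. (Incidentally, the paper's citation of Theorem~\ref{theorem:main} inside this proof is a typo for Theorem~\ref{theorem:topo}, which is what you correctly invoke.)
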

\begin{proof}
By the Sylow p-subgroup theory, $G$ has cyclic subgroup $H$ such that $|H|=p$ and $H$ is the stabilizer of some $x_a$. Therefore, $f_{x_a=1}$ is invariant under $H$. Since $H$ is cyclic and transitive on $\{x_1,...,x_n\}\setminus\{x_a\}$, $f_{x_a=1}$ is elusive according to Theorem \ref{theorem:main}. Combining Lemma \ref{lemma:f_x=1}, $f$ is elusive.
\end{proof}

Since $|G_5|=13*3*7*2^2$, the following directly follows. 
\begin{corollary}
Every non-trivial $G_5$-invariant monotone boolean function is elusive.
\end{corollary}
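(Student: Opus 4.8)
The plan is to recognize the corollary as an immediate instance of Lemma~\ref{lemma:new}, so that essentially no new argument is required once the arithmetic is checked. Since $G_5$ is one of the minimal transitive groups of degree $14$, it acts transitively on the $n=14$ variables of $f$. The key observation is that $14 = 13 + 1$ and $13$ is prime, so with $p = 13$ we have exactly the situation $n = p+1$ that Lemma~\ref{lemma:new} addresses. Thus the entire task reduces to verifying that $G_5$ satisfies the order hypothesis of that lemma.

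First I would record the factorization of the group order. We have $|G_5| = 1092 = 13 \cdot 84 = 13 \cdot 2^2 \cdot 3 \cdot 7$, which displays $|G_5|$ in the form $p \cdot k$ with $p = 13$ and $k = 84$. Since $84 = 2^2 \cdot 3 \cdot 7$ contains no factor of $13$, the non-divisibility condition $p \nmid k$ holds; equivalently, $13$ divides $|G_5|$ to exactly the first power, so a Sylow $13$-subgroup of $G_5$ is cyclic of order $13$. With the hypotheses of Lemma~\ref{lemma:new} confirmed (a nontrivial monotone function $f$ on $p+1 = 14$ variables, invariant under the transitive group $G_5$ of order $p \cdot k$ with $p$ prime and $p \nmid k$), I would then simply invoke the lemma to conclude that $f$ is elusive.

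There is no genuine obstacle at the level of the corollary itself: all the substance has been front-loaded into Lemma~\ref{lemma:new}, whose proof uses Sylow theory to produce a cyclic $p$-subgroup that arises as the stabilizer of some variable $x_a$, then applies Theorem~\ref{theorem:topo} to the restriction $f_{x_a=1}$ and transfers elusiveness back to $f$ via Lemma~\ref{lemma:f_x=1}. The only points that warrant a careful double-check are therefore purely numerical, namely that $13$ is indeed the prime with $p+1 = 14$ and that its exact multiplicity in $|G_5| = 1092$ is one. Once these are verified, the corollary follows directly, exactly as the sentence preceding the statement anticipates.
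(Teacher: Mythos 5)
Your proposal is correct and matches the paper's proof exactly: the paper also deduces the corollary immediately from Lemma~\ref{lemma:new}, citing the factorization $|G_5| = 13 \cdot 3 \cdot 7 \cdot 2^2$ so that $p = 13$, $k = 84$, and $p \nmid k$. Your write-up simply makes explicit the arithmetic check that the paper leaves as a one-line remark.
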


Now let us consider $G_6$. 

First we consider the subgroups of $G_6$. Note that if $f$ is invariant under $G_6$ then it is also invariant under any subgroup of $G_6$. Therefore, if $f$ is not elusive, then by Theorem \ref{theorem:Kahn} $\varDelta_f$ is collapsible and thus for any subgroup $G^{'} < G$, if $G^{'}$ is cyclic or $G^{'} \in \Psi_p$, $\chi(\varDelta^{G^{'}})=1$; if $G^{'} \in \Psi_p^q$ , $\chi(\varDelta^{G^{'}})\equiv 1 ~(mod~q)$. Now let us consider the 11 subgroups of $G_6$ listed in Table \ref{table:g4_sublist_1}, \ref{table:g4_sublist_2} and \ref{table:g4_sublist_3} in \ref{sec:sub_g_6}, denoted by $G_6^1,...,G_6^{11}$. By the above analysis, if an $G_6$-invariant monotone boolean function $f$ is not elusive, then $\chi(\varDelta^{G_6^i}_{f})=1$ for $1 \leq i \leq 10$ and $\chi(\varDelta^{G_6^{11}}_{f})\equiv 1 ~(mod~2)$. Note that when $G^{'}$ is identity, $\chi(\varDelta_{f}^{G^{'}})=1$ implies $\chi(\varDelta_{f})=1$.

Second, we consider $f$ restricted on one if its variables. For every $x_a \in\{x_1,...,x_n\}$, let  $\text{Link}(\varDelta,x_a)$ and $\text{Deletion}(\varDelta,x_a)$ be the subcomplexes of $\varDelta$, which are defined as 
\begin{equation*}
\text{Link}(\varDelta,x_a)=\{x - \{x_a\}|~x_a \in x, x \in \varDelta \},
\end{equation*}
and
\begin{equation*}
\text{Deletion}(\varDelta,x_a)=\{x |~x_a \notin x, x \in \varDelta \}.
\end{equation*}
It can be easily checked that $\text{Link}(\varDelta_f,x_a)=\varDelta_{f_{x_a=1}}$. Thus, if $f$ is not elusive, then $f_{x_a=1}$ is not elusive and therefore $\varDelta_{f_{x_a=1}}$ is collapsible, which implies $\chi(\varDelta_{f_{x_a=1}})=1$. Due to the weakly symmetry, once $r(\varDelta_{f},k)$ is known to us, the following relationship allows us to compute $r(\varDelta_{f_{x_a=1}},k)$ efficiently,
\begin{equation*}
n \cdot r(\varDelta_{f_{x_a=1}},k)=k \cdot r(\varDelta_{f},k).
\end{equation*}

By the above analysis, if $f$ is $G_6$-invariant but not elusive, the followings must be satisfied:
\begin{itemize}
\item $\chi(\varDelta^{G_6^i}_{f})=1$ for $1 \leq i \leq 10$ and $\chi(\varDelta^{G_6^{11}}_{f})\equiv 1 ~(mod~2)$;
\item $\chi(\varDelta_{f_{x_1=1}})=1$.
\end{itemize}
Our goal is to verify that such an $f$ does not exist, i.e.,  

\begin{theorem}
\label{theorem:G_6}
There is no monotone non-increasing $G_6$-invariant boolean function $f$ such that $\chi(\varDelta^{G_6^i}_{f})=1$ for $1 \leq i \leq 10$, $\chi(\varDelta^{G_6^{11}}_{f})\equiv 1 ~(mod~2)$, and
$\chi(\varDelta_{f_{x_1=1}})=1$.
\end{theorem}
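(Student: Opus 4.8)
The plan is to treat the statement as a finite feasibility problem and to rule it out by an exhaustive, computer-assisted search over all candidate complexes. A $G_6$-invariant monotone non-increasing boolean function $f$ is completely determined by the set of $G_6$-orbits of subsets of $\{x_1,\ldots,x_{14}\}$ that are contained in $\varDelta_f$; by monotonicity this set must be \emph{down-closed} in the orbit poset, in the sense that if an orbit is included then so is the orbit of every subset of each of its members. So the first step is to compute, in GAP, the action of $G_6$ on the $2^{14}$ subsets and to list its orbits grouped by cardinality. Since $|G_6|=168$, Burnside's lemma gives an orbit count on the order of a hundred, so the number of binary include/exclude decisions is modest, and a $G_6$-invariant monotone complex corresponds exactly to a down-closed family in this orbit poset.

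Next I would express every quantity in the three conditions as an explicit function of this orbit data. The face numbers $r(\varDelta_f,k)$ are sums of orbit sizes over the included $k$-orbits, so $\chi(\varDelta_f)$ is a linear form in the $f$-vector $(r(\varDelta_f,1),\ldots,r(\varDelta_f,14))$; using $\text{Link}(\varDelta_f,x_1)=\varDelta_{f_{x_1=1}}$ together with the symmetry identity $14\cdot r(\varDelta_{f_{x_1=1}},k)=k\cdot r(\varDelta_f,k)$, the value $\chi(\varDelta_{f_{x_1=1}})$ is likewise a linear form in the same $f$-vector. For each subgroup $G_6^i$, the fixed-point complex $\varDelta_f^{G_6^i}$ has the $G_6^i$-orbits of $\{1,\ldots,14\}$ as vertices, and a collection of such point-orbits is a face precisely when its union lies in $\varDelta_f$; hence $\chi(\varDelta_f^{G_6^i})$ is determined by reading off, from the orbit-membership table, which $G_6$-orbits of those particular unions are included. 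With this encoding the three bullet conditions become a system of linear equations together with one congruence modulo $2$ in the orbit indicator variables, subject to the down-closure constraints.

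The hard part will be the size of the search, since there are a priori exponentially many down-closed families. To make the verification provably terminate I would proceed in two stages. First I would enumerate only the admissible $f$-vectors, namely those satisfying $\chi(\varDelta_f)=1$ and $\chi(\varDelta_{f_{x_1=1}})=1$ and the Kruskal--Katona-type realizability bounds dimension by dimension; these exact Euler-characteristic equations are already severe, and I expect only a small list of $f$-vectors to survive. Then, for each surviving $f$-vector, I would search over the down-closed orbit families realizing it and test the remaining fixed-point conditions $\chi(\varDelta_f^{G_6^i})=1$ for $1\le i\le 10$ and $\chi(\varDelta_f^{G_6^{11}})\equiv 1 \pmod 2$. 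The main leverage is that the exact constraints for the ten Oliver subgroups $G_6^1,\ldots,G_6^{10}$ are extremely restrictive and should eliminate almost every remaining candidate at once, leaving only a handful of families to reject by a short final computation. Verifying that this search returns no solution then establishes the theorem, and hence, via the reasoning preceding the statement, the elusiveness of every nontrivial $G_6$-invariant monotone function.
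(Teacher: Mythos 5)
Your encoding of the problem is exactly the paper's: the faces are grouped into the $G_6$-orbits of subsets (158 of them), a monotone invariant complex is a down-closed family in the orbit poset, and every quantity in the statement --- $\chi(\varDelta_f)$, $\chi(\varDelta_{f_{x_1=1}})$ via $14\cdot r(\varDelta_{f_{x_1=1}},k)=k\cdot r(\varDelta_f,k)$, and each fixed-point complex $\chi(\varDelta_f^{G_6^i})$ --- is read off from the orbit indicator variables. The gap is in the search organization, and this is not a minor efficiency issue but the actual content of the paper's proof. You defer the eleven subgroup conditions until after a complete complex has been constructed, and you prune first with the two global equations $\chi(\varDelta_f)=1$ and $\chi(\varDelta_{f_{x_1=1}})=1$. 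These are only two linear equations on the fourteen face numbers, and each $r(\varDelta_f,k)$ ranges over subset sums of the sizes of up to roughly thirty $k$-orbits (e.g.\ there are thirty $7$-orbits), so the set of $f$-vectors surviving your stage 1 is astronomically large; the expectation that ``only a small list'' survives is unjustified and almost certainly false. Worse, even for one admissible $f$-vector, your stage 2 (``search over the down-closed orbit families realizing it'') is again an unpruned exponential search --- essentially the $2^{158}$ problem you set out to avoid, since fixing the face numbers barely constrains which orbits are chosen.

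The missing idea is the paper's constraint-propagation device: process the subgroup conditions \emph{first}, ordered so that subgroups with the fewest orbits come earliest ($G_6^{11}$ and $G_6^{10}$ have only two orbits each). Each such condition admits only a handful of feasible fixed-point complexes, and each choice forces entire regions of the orbit poset to become T-orbits or F-orbits through the monotone closures $\text{Lower}(\cdot)$ and $\text{Upper}(\cdot)$; for instance the very first branch already fixes the status of over twenty orbits. This propagation collapses the branching tree to a small number of cases, so that by the time the identity subgroup (condition $\chi(\varDelta_f)=1$) is processed no free orbits remain, and $\chi(\varDelta_{f_{x_1=1}})=1$ is a final check on a completely determined function. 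Your plan applies the weakest constraints first and the strongest constraints last, so as stated it would not terminate in practice and therefore does not yield a verification of the theorem.
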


To this end, let us consider the pattern of the orbits generated by $G_6$. We call a $k$-subsets of $\{x_1,...,x_{14}\}$ as a $k$-tuple. Let $T_k$ be the set of all k-tuples. The orbits on the k-tuples generated by $G_6$ are called $k$-orbits. A $k$-orbit is a subset of $T_k$. For example, $G_6$ forms two 2-orbits on the 2-tuples where one orbit has 84 elements and another one has 7 elements.  For a $k_1$-orbit $O_{1}$ and a $k_2$-orbit $O_{2}$ where $k_1<k_2$, if there exists two tuples $t_1$ and $t_2$ such that $t_1 \in O_1$, $t_2 \in O_2$ and $t_1 \subseteq t_2$, we say $O_1$ is smaller than $O_2$ or equivalently $O_2$ is larger than $O_1$, denoted by $O_1 \leq O_2$. Let $\text{Upper}(O)$ be the set of orbits which are larger than orbit $O$, i.e, $\text{Upper}(O)=\{O^{'}| O \leq O^{'}\}$, and similarly, let $\text{Lower}(O)=\{O^{'}| O^{'} \leq O \}$. Because $f$ is invariant under $G_6$, the tuples in the same k-orbit must have the same function value. We say a k-orbit is a T-orbit (resp. F-orbit) if the tuples in it result true (resp. false) function value. Due to the monotonicity, if an orbit $O$ is a T-orbit, then the orbits in $\text{Lower}(O)$ must be T-orbits; if an orbit $O$ is an F-orbit, then the orbits in $\text{Upper}(O)$ must be F-orbits. The relationship between the orbits under $G_6$ are shown in Fig. \ref{fig:pattern}, where if there is an edge between two orbits, then one is larger than the other. Since $G_6$ is given explicitly, the relationship of orbits can be easily computed by program. We number the orbits consistently and let $O_{i,j}$ be the $j$-th $i$-orbit, $1 \leq i \leq 14$ and $j \geq 0$. 
\begin{figure}[h]
\begin{center}
\includegraphics[width=3in]{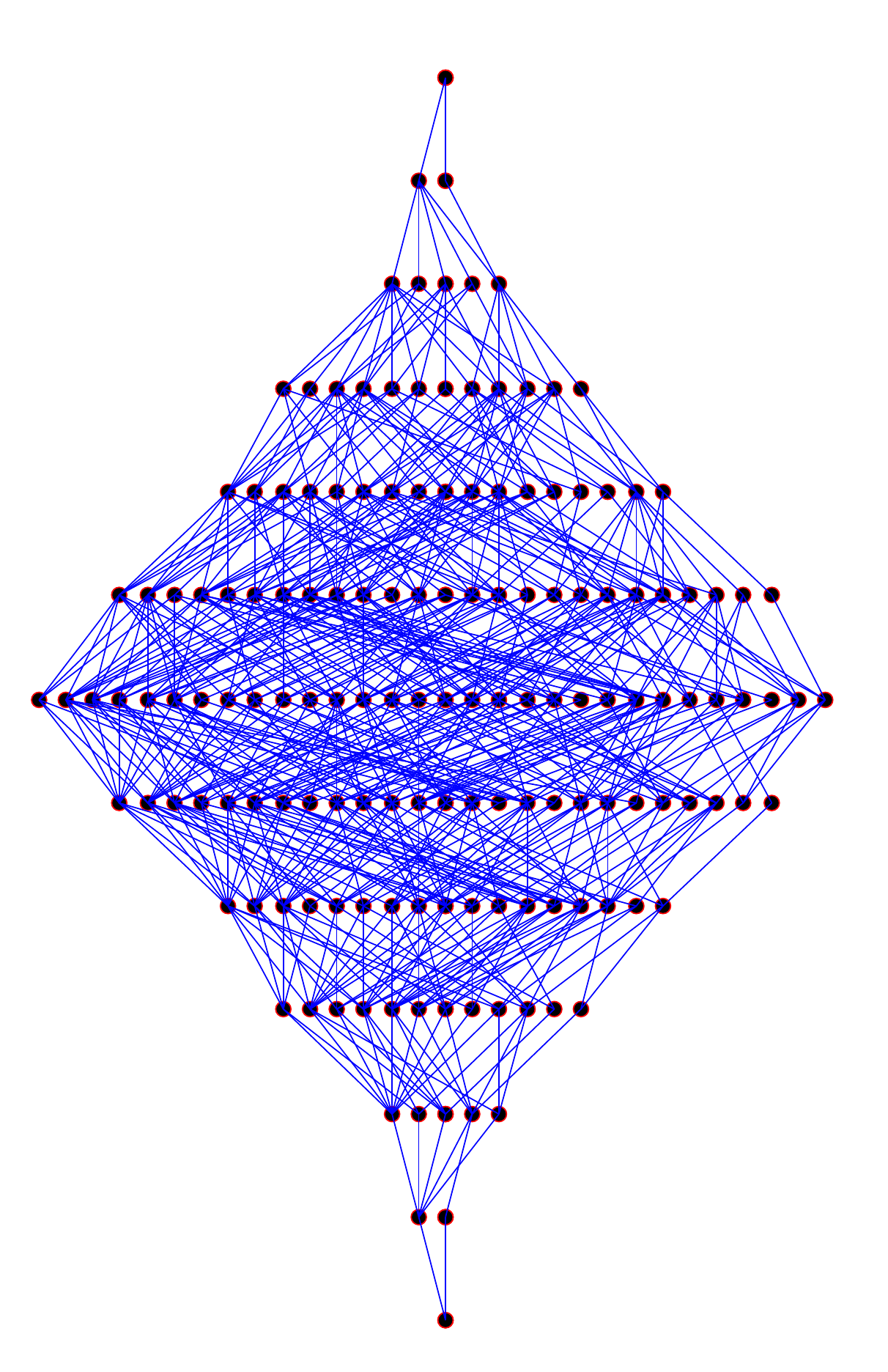} 
\end{center} 
\caption{\textbf{Orbits pattern of $G_6$}}
\label{fig:pattern}
\end{figure}

As shown in Fig. \ref{fig:pattern}, there are totally 158 orbits under $G_6$, which means there are $2^{158}$ boolean functions invariant under $G_6$. Thus, it is impracticable to directly check all those functions. In the following, we will show that it suffices to consider a small number of functions, due to a case by case analysis. Initially, all the orbits are called free orbits. In order to satisfy the conditions in Theorem \ref{theorem:G_6}, some orbits have to be determined as T-orbits or F-orbits. For example, because $\chi(\varDelta^{G_6^{11}}_{f})\equiv 1 ~(mod~2)$ and $G_6^{11}$ forms two orbits $H_{11}^1$ and $H_{11}^2$ on 1-tuples, $\varDelta^{G_6^{11}}_{f}$ is either $\{\{H_{11}^1\},\emptyset\}$, $\{\{H_{11}^2\},\emptyset\}$ or $\{\{H_{11}^1\},\{H_{11}^2\},\{H_{11}^1,H_{11}^2\}, \emptyset\}$. However, $f$ is nontrivial which means $\{H_{11}^1,H_{11}^2\} \notin \varDelta_f$. Therefore only $\{\{H_{11}^1\},\emptyset\}$ and $\{\{H_{11}^2\},\emptyset\}$ are possible. If $\{\{H_{11}^1\},\emptyset\}$ is the case, then according to Table \ref{table:g4_sublist_3}, the 6-tuple $H_{11}^1=\{4,5,11,7,14,12\}$ must be a true input and $H_{11}^2=\{1,9,3,10,6,8,2,12\}$ must be false input. Because $H_{11}^1$ and $H_{11}^2$ belongs to orbits $O_{6.24}$ and $O_{8.24}$, respectively, orbits in $\text{Lower}(O_{6.24})$\footnote{The index of the orbits does not matter as long as long it is consistent. Here we use the index generated by our program for illustration.} should be T-orbits and the orbits in $\text{Upper}(O_{8.24})$ should be F-tuples. Therefore, by checking the conditions in Theorem \ref{theorem:G_6} we can keep determining the type of the orbits. After checking $\chi(\varDelta^{G_6^{1}}_{f})$, there will be no free orbits, which implies the function is completely determined. Finally, we can check $\chi(\varDelta_{f_{x_1=1}})$ to see whether there is an $G_6$-invariant function satisfying all the conditions in Theroem \ref{theorem:G_6}. Specifically, we will first check the $\chi(\varDelta^{G_6^{i}}_{f})$ where $G_6^{i}$ has the fewest orbits. The checking framework is shown in Algorithm \ref{alg:check}. The whole process is done by a Java-based programming combing with the GAP system. It turns out that no type setting of the orbits can satisfy all the conditions. In the appendix, we provide an example to show one branch of the computing. 

\begin{algorithm}[h]
\caption{ \textbf{check(T-orbits, F-orbits, index)}}
\label{alg:check}
\begin{algorithmic}[1]
\If {(index $\leq$ 11) }
		\For {each feasible case such that $\chi(\varDelta^{G_6^{index}}_{f})=1$}  
		\State	update T-orbits, F-orbits;
		\State	check(T-orbits, F-orbits, index+1);
		\EndFor
\Else	\If {(index $==$ 12)}
		\State compute $\chi(\varDelta^{G_6}_{f_{x_1=1}})$ according to $T$-orbits, $F$-orbits.
			\If { $\chi(\varDelta^{G_6}_{f_{x_1=1}}) ==1 $}
			\State \Return a feasible boolean function found;
			\EndIf
		\EndIf
\EndIf
\end{algorithmic}
\end{algorithm}

\section{Discussion}
There has been other works that manage to verify the elusiveness of a boolean function by programming. For example, in \cite{lutz2001some}, the authors have checked the evasiveness of a $G$-variant boolean function for some $G$ by enumerating the complexes and checking the $Z_p$-acyclic. However, given a group $G$, checking all the $G$-invariant boolean functions in brute force is extremely time consuming and the method proposed in \cite{lutz2001some} cannot deal with the case for $14 \leq n$. The checking framework proposed in this paper in more efficient and fundamentally it reveals how the weakly symmetry forces the complex to be a simplex.

The initial conjecture made by Rivest and Vuillemin \cite{rivest1975generalization} is that every weakly symmetric boolean function $f$ with $f(\emptyset)\neq f(\{x_1,...,x_n\})$ is elusive, which is negated by Illies by a counterexample \cite{kahn1984topological}. Aigner \cite{aigner1988combinatorial} further modify the conjecture into its current version by adding the condition of monotonicity. Due to the monotonicity, a boolean function $f$ is equivalent to an abstract simplicial complex $\varDelta_f$. The critical observation by Kahn et al. \cite{kahn1984topological} shows if $f$ is non-elusive then $\varDelta_f$ must be collapsible and therefore contractible, which enables us to apply the fixed-point theory. For a contractible abstract simplicial complex with a automorphism group $G$, Oliver \cite{oliver1975fixed} shows that under certain circumstance (i.e., Oliver group) there exists a face which is fixed by $G$. Therefore, if the invariant group is an Oliver transitive group, $\varDelta_f$ must be a simplex, which means $f$ is trivial. When $G$ is not a Oliver group, we may apply the fixed-point theory to its subgroups, as shown in this paper. Given the invariant group, we have although large but limited number of boolean functions. While applying the fixed-point theory to the subgroups, we are able to eliminate the complexes that are not collapsible. Kahn et al. \cite{kahn1984topological} propose a conjecture that a non-empty collapsible weakly symmetric complex must be a simplex. The truth of this conjecture yields the truth of Revest-Vuillemin conjecture\footnote{As mentioned in \cite{kahn1984topological}, Oliver has provided a plausibility argement for the falsity of this conjecture, in personal communication.}. 

Finally, we remark a stronger condition. Note that the Link and Deletion of a non-evasive weakly symmetric complex must be non-evasive. Thus, the following conjecture implies the Revest-Vuillemin conjecture:
\begin{conjecture}
For a non-empty weakly symmetric complex $\varDelta$, if $Link(\varDelta,x)$ and $Deletion(\varDelta,x)$ are all collapsible, then $\varDelta$ is a simplex.
\end{conjecture}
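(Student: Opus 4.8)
The plan is to prove Theorem~\ref{theorem:main} by reducing it to a finite case analysis over the minimal transitive groups of degree $14$ and then disposing of each case with the tools already assembled. First I would record the structural reduction: a weakly symmetric function is invariant under some transitive group $G \le S_{14}$, and by the classification of transitive groups of degree $14$ (\cite{hulpke2005constructing}) every such $G$ contains one of the six minimal transitive groups $G_1,\dots,G_6$ of Table~\ref{table:list} as a subgroup. Since invariance is inherited by subgroups, any nontrivial monotone non-increasing weakly symmetric $f$ is $G_i$-invariant for some $i$, so it suffices to show that every nontrivial monotone non-increasing $G_i$-invariant function is elusive for each $i \in \{1,\dots,6\}$.

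The easy cases are $G_1,\dots,G_5$. For $G_1,\dots,G_4$ I would exhibit explicit generators and a normal series showing that $G_1$ is cyclic, $G_2 \in \Psi_7$, $G_3 \in \Psi_2$ and $G_4 \in \Psi_7^2$; in each case the group is transitive and either cyclic or an Oliver group, so Theorem~\ref{theorem:topo} forces any nontrivial invariant $f$ to be elusive, the only work being the location of the normal $p$-subgroup with cyclic quotient, which GAP supplies. For $G_5$, which is neither cyclic nor solvable, I would instead invoke the Sylow argument of Lemma~\ref{lemma:new}: since $|G_5| = 1092 = 13 \cdot 84$ with $13 \nmid 84$ and $14 = 13 + 1$, a Sylow $13$-subgroup is cyclic of prime order and fixes a single point $x_a$, so $f_{x_a=1}$ is cyclically invariant on the remaining $13$ points and hence elusive; Lemma~\ref{lemma:f_x=1} then lifts elusiveness back to $f$.

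The genuinely hard case is $G_6$, and this is where essentially all the difficulty lives. Here $G_6$ is neither cyclic, nor an Oliver group, nor even solvable, and $14$ is not of the form $p+1$ for any prime $p$ dividing $|G_6| = 168$, so neither Theorem~\ref{theorem:topo} nor Lemma~\ref{lemma:new} applies. The plan is to argue by contradiction: if some nontrivial monotone non-increasing $G_6$-invariant $f$ were \emph{not} elusive, then by Theorem~\ref{theorem:Kahn} the complex $\varDelta_f$ would be collapsible, so Theorem~\ref{theorem:Oliver} applied to the subgroups $G_6^1,\dots,G_6^{11}$ would give $\chi(\varDelta_f^{G_6^i}) = 1$ for the cyclic and $\Psi_p$ subgroups and $\chi(\varDelta_f^{G_6^{11}}) \equiv 1 \pmod 2$ for the $\Psi_p^q$ subgroup, while collapsibility of the link $\mathrm{Link}(\varDelta_f,x_1)=\varDelta_{f_{x_1=1}}$ would give $\chi(\varDelta_{f_{x_1=1}}) = 1$. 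But these are exactly the hypotheses ruled out by Theorem~\ref{theorem:G_6}, a contradiction; hence every nontrivial $G_6$-invariant $f$ is elusive as well, and combining the six cases yields Theorem~\ref{theorem:main}.

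The main obstacle is thus entirely concentrated in establishing Theorem~\ref{theorem:G_6}, i.e.\ in showing that the system of Euler-characteristic congruences imposed on the $158$ orbits of $G_6$, together with monotonicity and the link constraint, is infeasible. I expect this to require the pruned case-by-case search of Algorithm~\ref{alg:check} rather than a closed-form argument, since the constraints couple the orbit types across many dimensions; the art will be ordering the subgroup checks so that each $\chi(\varDelta_f^{G_6^i}) = 1$ condition forces enough orbits to be T- or F-orbits, via the monotone closure under $\mathrm{Upper}(\cdot)$ and $\mathrm{Lower}(\cdot)$, to keep the branching factor small enough for the verification to terminate.
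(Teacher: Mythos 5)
There is a fundamental mismatch here: the statement you were asked to address is the paper's closing \emph{conjecture} --- that a non-empty weakly symmetric complex $\varDelta$ whose links $\mathrm{Link}(\varDelta,x)$ and deletions $\mathrm{Deletion}(\varDelta,x)$ are all collapsible must be a simplex --- but your proposal is, from start to finish, a proof plan for Theorem~\ref{theorem:main} (elusiveness for $14$ variables). It does faithfully reconstruct the paper's actual route to that theorem (reduction to the six minimal transitive groups, Theorem~\ref{theorem:topo} for $G_1$--$G_4$, the Sylow argument of Lemma~\ref{lemma:new} for $G_5$, and the computational infeasibility result of Theorem~\ref{theorem:G_6} for $G_6$), but none of this touches the conjecture. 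The gap is not a missing step; it is that no argument for the stated claim is offered at all.

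Moreover, the machinery you assemble cannot in principle prove the conjecture. First, the conjecture quantifies over weakly symmetric complexes on \emph{any} number of vertices, whereas your case analysis is specific to degree $14$ and to its six minimal transitive groups; nothing in the plan generalizes. Second, every topological tool in the paper (Theorems~\ref{theorem:Kahn}, \ref{theorem:Oliver}, \ref{theorem:topo}) takes collapsibility of $\varDelta$ itself as input, but the conjecture's hypothesis only supplies collapsibility of the links and deletions at vertices; collapsibility of $\mathrm{Link}(\varDelta,x)$ and $\mathrm{Deletion}(\varDelta,x)$ does not by itself let you invoke Oliver's fixed-point theorem on $\varDelta$, and bridging that is exactly the open difficulty. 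Indeed the paper states this as a conjecture precisely because no proof is known, and the closely related conjecture of Kahn \emph{et al.} (collapsible weakly symmetric implies simplex) is, per the paper's footnote, suspected to be false on a plausibility argument of Oliver --- so any genuine attack on this statement would have to isolate what the stronger ``all links and deletions collapsible'' hypothesis buys beyond mere collapsibility, which your proposal never addresses.
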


The condition in the above statement is stronger and it has a clear meaning that the complex is not only collapsible but also be able to collapse to a point along a certain sequence of collapses.

\appendix
\section{Subgroups of $G_6$}
\label{sec:sub_g_6}
\begin{table}[H]
\caption{Subgroups of $G_6$ \label{table:g4_sublist_1}}
\small
{\begin{tabular}{ | p{0.8cm} | p{1cm} | p{7.5cm} |p{1cm}|}
\hline 
Group &index & Generators and orbits & type\\ 
\hline 
$G_6^1$ &(1)& identity& \\ 
\hline
$G_6^{2}$ &(2)& \textbf{Generators}: \newline
			(2,4)(3,10)(5,6)(7,14)(9,11)(12,13) \newline
			\textbf{orbits}: \newline
			$H_{2}^1$: $\{ 1 \}\in O_{1.0}$;~~~~~$H_{2}^2$: $\{2, 4\}\in O_{2.0}$; \newline
			$H_{2}^3$: $\{ 3,10 \}\in O_{2.1}$; $H_{2}^4$: $\{5, 6\}\in O_{2.0}$; \newline
			$H_{2}^5$: $\{7, 14 \}\in O_{2.1}$; $H_{2}^6$: $\{8\}\in O_{1.0}$; \newline
			$H_{2}^7$: $\{9, 11 \}\in O_{2.0}$; $H_{2}^8$: $\{12, 13\}\in O_{2.0}$ 
			& cyclic
			\\ 
\hline

$G_6^{3}$ &(23)&   \textbf{Generators}: \newline
			(2,3,6)(4,7,12)(5,11,14)(9,10,13)\newline 
			\textbf{orbits}: \newline
			$H_{3}^1$: $\{1 \} \in O_{1.0}$; ~~~~~~$H_{3}^2$: $\{2, 3, 6 \} \in O_{3.2}$; \newline
			$H_{3}^3$:$\{ 4, 7, 12 \} \in O_{3.1}$; $H_{3}^4$: $\{5, 11, 14\} \in O_{3.4}$; \newline 
			$H_{3}^5$: $\{8 \} \in O_{1.0}$; ~~~~~~$H_{3}^6$: $\{9, 10, 13\} O_{3.2}$ & cyclic
			\\ 
\hline

$G_6^{4}$ &(51)& \textbf{Generators}: \newline
			(1,8)(2,13)(3,10)(4,12)(5,11)(6,9), \newline
			(1,8)(2,12)(4,13)(5,9)(6,11)(7,14) \newline
			\textbf{orbits}:\newline
			$H_{4}^1$: $\{ 1, 8 \} \in O_{2.1}$,  	$H_{4}^2$: $\{ 2, 13, 12, 4 \} \in O_{4.10}$,  \newline
			$H_{4}^3$: $\{ 3, 10 \} \in O_{2.1}$,  $H_{4}^4$: $\{ 5, 11, 9, 6 \} \in O_{4.0}$,  \newline
			$H_{4}^5$: $\{ 7, 14 \} \in O_{2.1}$& $\Psi_2$
			\\
\hline
$G_6^{5}$ &(58)& \textbf{Generators}: \newline  
				(2,11)(3,7)(4,9)(5,12)(6,13)(10,14), \newline
				 (2,4)(3,10)(5,6)(7,14)(9,11)(12,13) \newline
				 \textbf{orbits}: \newline
				$H_{5}^1$: $\{ 1 \} \in O_{1.0}$, $H_{5}^2$: $\{ 2, 4, 11, 9 \} \in O_{(4.11)}$, \newline
				$H_{5}^3$: $\{ 3, 10, 7, 14 \}\in O_{4.11}$, \newline
				$H_{5}^4$: $\{ 5, 6, 12, 13 \} \in O_{4.11}$, $H_{5}^5$: $\{ 8 \} \in O_{1.0}$&$\Psi_2$
				\\

\hline
\end{tabular}}
\end{table}
\begin{table}[H]
\caption{Subgroups of $G_6$ (cont.)\label{table:g4_sublist_2}}
\small
{\begin{tabular}{ | p{1cm} | p{1cm} | p{7cm} |p{1cm}|}
\hline 
Group & index & Generators and orbits on 1-tuples& type\\ 
\hline
$G_6^{6}$ & (65)&  \textbf{Generators}: \newline  
	(1,8)(2,5,4,6)(3,7,10,14)(9,12,11,13) \newline 
	\textbf{orbits}: 
	$H_{6}^1$: $\{ 1, 8 \} \in O_{2.1}$; $H_{6}^2$: $\{ 2, 5, 4, 6 \} \in O_{4.7}$, \newline
	$H_{6}^3$: $\{ 3, 7, 10, 14 \} \in O_{4.11}$,\newline
	$H_{6}^4$: $\{ 9, 12, 11, 13 \} \in O_{4.7}$ & cyclic
	\\	
\hline 
$G_6^{7}$ & (86)&  \textbf{Generators}: \newline  
				(2,3,6)(4,7,12)(5,11,14)(9,10,13) \newline
				(1,8)(2,13)(3,10)(4,12)(5,11)(6,9) \newline
				\textbf{orbits}: \newline 
				 $H_{7}^1$: $\{ 1, 8 \} \in O_{2.1}$, \newline
				$H_{7}^2$: $\{ 2, 3, 13, 6, 10, 9 \} \in O_{6.23}$, \newline
				$H_{7}^3$: $\{ 4, 7, 12 \} \in O_{3.1}$, $H_{7}^4$: $\{ 5, 11, 14 \} \in O_{3.4}$	& $\Psi_2$
				\\
	
\hline
$G_6^{8}$ & (142)&   \textbf{Generators}: \newline  
				(1,14)(2,9)(4,6)(5,12)(7,8)(11,13) \newline
				 (1,8)(2,10)(3,9)(4,7)(6,13)(11,14) \newline
				 \textbf{orbits}: \newline  
				$H_{8}^1$: $\{ 1, 14, 8, 11, 7, 13, 4, 6 \} \in O_{8,24}$, \newline
				$H_{8}^2$: $\{ 2, 9, 10, 3 \} \in O_{4.11}$, $H_{8}^3$: $\{ 5, 12\}\in O_{2.1}$	& $\Psi_2$
	\\
\hline
$G_6^{9}$ & (149)&  \textbf{Generators}: \newline  
				(1,11,3)(2,6,14)(4,10,8)(7,9,13)\newline
				(1,3)(2,9)(4,5)(6,13)(8,10)(11,12)\newline
				\textbf{orbits}: \newline  
				$H_{9}^1$: $\{ 1, 11, 3, 12 \} \in O_{4.10}$, \newline
				$H_{9}^2$: $\{ 2, 6, 9, 14, 13, 7 \} \in O_{6.24}$, \newline
				$H_{9}^3$: $\{ 4, 10, 5, 8 \} \in O_{4.10}$		& $\Psi_2$
	\\

\hline
\end{tabular}}
\end{table}

\begin{table}[H]
\caption{Subgroups of $G_6$ (cont.) \label{table:g4_sublist_3}}
\small
{\begin{tabular}{ | p{0.8cm} | p{1cm} | p{7.5cm} |p{1cm}|}
\hline 
Group &index & Generators and orbits on 1-tuples& type\\ 
\hline
$G_6^{10}$ & (157)&  \textbf{Generators}: \newline  
				 (1,6,4)(2,12,3)(5,10,9)(8,13,11),\newline
				 (1,12,6)(3,7,4)(5,13,8)(10,14,11) \newline
				 \textbf{orbits}: \newline  
				 $H_{10}^1$: $\{ 1, 6, 12, 4, 3, 2, 7 \} \in O_{7.20}$, \newline
				 $H_{10}^2$: $\{ 5, 10, 13, 9, 14, 11, 8  \} \in O_{7.27}$		& $\Psi_7$
	\\
\hline
$G_6^{11}$ & (165)&    \textbf{Generators}: \newline  
					(1,9,10)(2,3,8)(4,5,7)(11,12,14),\newline
				  (1,3,9,6)(2,13,8,10)(4,11)(5,14,12,7) \newline
				  \textbf{orbits}: \newline  
				$H_{11}^1$: $\{ 4, 5, 11, 7, 14, 12 \} \in O_{6.24},$ \newline
				$H_{11}^2$:$ \{ 1, 9, 3, 10, 6, 8, 2, 13 \} \in O_{8.24}$
			& $\Psi_2^2$
	\\	
\hline
\end{tabular}}
\end{table}

\section{A case study for Algorithm \ref{alg:check}}
\textbf{Step 1:} As discussed in Sec. \ref{subsec:g_5-6}, in order to meet that $\chi(\varDelta^{G_6^{11}}_{f})\equiv 1 ~(mod~2)$, there are two cases to consider. Suppose  $\chi(\varDelta^{G_6^{11}}_{f})=\{\{H_{11}^1\},\emptyset\}$ is selected. Let $\Theta_T$ and $\Theta_F$ be the set of the T-orbits and F-orbits that be currently determined
. Thus,  $\Theta_T=\text{Lower}(O_{6.24})$ and $\Theta_F=\text{Upper}(O_{8.24})$. According to the relationship in Fig. \ref{fig:pattern}, currently,
\begin{eqnarray*}
\Theta_T=&\{&O_{1.0},O_{2.0},O_{2.1},O_{3.1},O_{1.3},O_{1.4},O_{4.9},O_{4.11},O_{5.16,}O_{6.24}\}. \\
\Theta_F=&\{&O_{8.24}, O_{9.16}, O_{10.9}, O_{10.11}, O_{11.1}, O_{11.3}, O_{11.4}, O_{12.0}, O_{12.1,}, O_{13.0}, O_{14.0}\}.
\end{eqnarray*}

\textbf{Step 2:} Now we consider $\varDelta^{G_6^{10}}_{f}$. Note that $H_{10}^1 \in O_{7.20}$ and $H_{10}^2 \in O_{7.27}$. Because neither of $O_{7,20}$ or $O_{7,27}$ is in $\Theta_T$ or $\Theta_F$, we have two cases to consider. One is $\chi(\varDelta^{G_6^{10}}_{f})=\{\{H_{10}^1\},\emptyset\}$ and the other is $\chi(\varDelta^{G_6^{10}}_{f})=\{\{H_{10}^2\},\emptyset\}$. Suppose  $\chi(\varDelta^{G_6^{10}}_{f})=\{\{H_{10}^1\},\emptyset\}$ is true. Now more orbits can be determined as T- or F-orbits. In particular, $\Theta_T=\Theta_T \cup \text{Lower}(O_{7.20})$, $\Theta_F=\Theta_F \cup \text{Upper}(O_{7.27})$. According to the relationship in Fig. \ref{fig:pattern},
\begin{eqnarray*}
\Theta_T=&\{&O_{1.0}, O_{2.0}, O_{2.1}, O_{3.0}, O_{3.1}, O_{3.2}, O_{3.3}, O_{3.4}, O_{4.0}, O_{4.2}, O_{4.3}, O_{4.9}， \\
&& O_{4.11}, O_{5.1},  O_{5.16}, O_{6.7}, O_{6.24}, O_{7.20} \}. \\
\Theta_F=&\{&O_{7.27}, O_{8.12}, O_{8.24}, O_{9.3}, O_{9.16}, O_{10.0}, O_{10.3}, O_{10.5}, O_{10.9}, O_{10.11}, \\
&&O_{11.0}, O_{11.1}, O_{11.2},  O_{11.3}, O_{11.4}, O_{12.0}, O_{12.1,}, O_{13.0}, O_{14.0}\}.
\end{eqnarray*}

\textbf{Step 3:} Now we consider $\varDelta^{G_6^{9}}_{f}$. One can check that   $H_9^1 \in O_{4.10}$, $H_9^2 \in O_{6.24} \in \Theta_T$, $H_9^1 \cup H_9^3 \in O_{8.24} \in \Theta_F$, $H_9^2 \cup H_9^3 \in O_{8.24} \in \Theta_F$, and $H_9^1 \cup H_9^2 \in O_{10.6}$. Therefore, in order to make $\chi(\varDelta^{G_6^{9}}_{f})=1$ there are only two possible cases, $\varDelta^{G_6^{9}}_{f}=\{\{H_{9}^2\},\emptyset\}$ and $\varDelta^{G_6^{9}}_{f}=\{\{H_{9}^1\},\{H_{9}^2\},\{H_{9}^3\},\{H_{9}^1,H_{9}^2\},\{H_{9}^3,H_{9}^2\},\emptyset\}$. Suppose  $\varDelta^{G_6^{9}}_{f}=\{\{H_{9}^1\},\emptyset\}$ is true. Then there one new F-orbits and no T-orbits added . Thus, $\Theta_F=\Theta_F \cup \text{Upper}(O_{4.10})$. According to the relationship in Fig. \ref{fig:pattern},
\begin{eqnarray*}
\Theta_T=&\{&O_{1.0}, O_{2.0}, O_{2.1}, O_{3.0}, O_{3.1}, O_{3.2}, O_{3.3}, O_{3.4}, O_{4.0}, O_{4.2}, O_{4.3}, O_{4.9}， \\
&& O_{4.11}, O_{5.1},  O_{5.16}, O_{6.7}, O_{6.24}, O_{7.20} \}. \\
\Theta_F=&\{& O_{4.10}, O_{5.5}, O_{5.10}, O_{6.1}, O_{6.8}, O_{6.13}, O_{6.16}, O_{6.20}, \\
&& O_{7.0}, O_{7.1}, O_{7.7}~O_{7.11},  O_{7.13}, O_{7.19}, O_{7.21},  O_{7.22},  O_{7.26}, O_{7.27}, O_{7.29}, \\
&& O_{8.0}\sim O_{8.7}, O_{8.11}\sim O_{8.16}, O_{8.19}\sim O_{8.22}, O_{8.24}, \\
&& O_{9.0}\sim O_{9.11}, O_{9.13}\sim O_{9.16},\\
&& O_{10.0}\sim O_{10.11}, O_{11.0}\sim O_{11.4}, O_{12.0}, O_{12.1,}, O_{13.0}, O_{14.0}\}.
\end{eqnarray*}

\textbf{Step 4:} Now we consider $\varDelta^{G_6^{7}}_{f}$. One can check that $H_7^1 \in O_{2.1} \in \Theta_T$,  $H_7^2 \in O_{6.23}$,  $H_7^3 \in O_{3.1} \in \Theta_T$,  $H_7^4 \in O_{3.4} \in \Theta_T$, $H_7^1 \cup H_7^3 \in O_{5.3}$, $H_7^1 \cup H_7^4 \in O_{5.15}$, $H_7^3 \cup H_7^4 \in O_{6.24} \in \Theta_T$, and for all $10 \leq k$, $O_{k,j} \subseteq \Theta_F$. Therefore, in order to make $\chi(\varDelta^{G_6^{7}}_{f})$ be 1, there are four possible cases,
\begin{enumerate}
\item $\varDelta^{G_6^{7}}_{f}=\{\{H_{7}^1\},\{H_{7}^3\},\{H_{7}^4\},\{H_{7}^3,H_{7}^4\},\{H_{7}^1,H_{7}^3\},\emptyset\}$;
\item $\varDelta^{G_6^{7}}_{f}=\{\{H_{7}^1\},\{H_{7}^3\},\{H_{7}^4\},\{H_{7}^3,H_{7}^4\},\{H_{7}^1,H_{7}^4\},\emptyset\}$;
\item $\varDelta^{G_6^{7}}_{f}=\{\{H_{7}^1\},\{H_{7}^3\},\{H_{7}^4\},\{H_{7}^3,H_{7}^4\},\{H_{7}^1,H_{7}^3\},\{H_{7}^1,H_{7}^4\},\{H_{7}^1,H_{7}^3,H_{7}^4\},\emptyset\}$;
\item $\varDelta^{G_6^{7}}_{f}=\{\{H_{7}^1\},\{H_{7}^2\},\{H_{7}^3\},\{H_{7}^4\},\{H_{7}^3,H_{7}^4\},\{H_{7}^1,H_{7}^3\},\{H_{7}^1,H_{7}^4\},\emptyset\}$.
\end{enumerate}
Suppose the first one is true. Then there is one new T-orbits and two new F-orbits. Thus, $\Theta_T=\Theta_T \cup \text{Upper}(O_{5.3})$ and $\Theta_F=\Theta_F \cup \text{Upper}(O_{5.15}) \cup \text{Upper}(O_{6.23})$. According to the relationship in Fig. \ref{fig:pattern},
\begin{eqnarray*}
\Theta_T=&\{&O_{1.0}, O_{2.0}, O_{2.1}, O_{3.0}\sim O_{3.4}, O_{4.0}, O_{4.2}, O_{4.3}, O_{4.6}, O_{4.9}， \\
&& O_{4.11}, O_{5.1},  O_{5.3}, O_{5.16}, O_{6.7}, O_{6.24}, O_{7.20} \}. \\
\Theta_F=&\{& O_{4.10}, O_{5.5}, O_{5.10}, O_{5.15}, O_{6.1}, O_{6.5}, O_{6.8}, O_{6.13}, O_{6.16}, O_{6.20}, O_{6.22}, O_{6.23},\\
&& O_{7.0}, O_{7.1}, O_{7.3}, O_{7.4}, O_{7.6}~O_{7.11},  O_{7.13}, O_{7.14}, O_{7.16}, O_{7.17}, O_{7.19}, O_{7.21},  O_{7.22}, \\
&& O_{7.24}, O_{7.26}, O_{7.27}, O_{7.28}, O_{7.29}, \\
&& O_{8.0}\sim O_{8.9}, O_{8.11}\sim O_{8.24}, O_{9.0}\sim O_{9.11}, O_{9.13}\sim O_{9.16},\\
&& O_{10.0}\sim O_{10.11}, O_{11.0}\sim O_{11.4}, O_{12.0}, O_{12.1,}, O_{13.0}, O_{14.0}\}.
\end{eqnarray*}

\textbf{Step 5:} Now we consider $\varDelta^{G_6^{6}}_{f}$. According to $\Theta_T$ and $\Theta_F$, we can check that $H_6^1 \in O_{2.1} \in \Theta_T$, $H_6^2 \in O_{4.7}$,  $H_6^3 \in O_{4.11} \in \Theta_F$, $H_6^4 \in O_{4.7}$, $H_6^1 \cup H_6^2 \in O_{6.19}$, $H_6^1 \cup H_6^3 \in O_{6.24}$, $H_6^1 \cup H_6^4 \in O_{6.19}$, $H_6^2 \cup H_6^3 \in O_{8.10}$,  $H_6^2 \cup H_6^4 \in O_{8.14} \in \Theta_F$, $H_6^3 \cup H_6^4 \in O_{8.10}$, and, again, for all $10 \leq k$, $O_{k,j} \in \Theta_F$. 
Therefore, in order to make $\chi(\varDelta^{G_6^{7}}_{f})$ be 1, there are three possible cases,
\begin{enumerate}
\item $\varDelta^{G_6^{6}}_{f}=\{\{H_{6}^1\},\{H_{6}^3\},\{H_{6}^1,H_{6}^3\},\emptyset\}$;
\item $\varDelta^{G_6^{6}}_{f}=\{\{H_{6}^1\},\{H_{6}^2\},\{H_{6}^3\},\{H_{6}^4\},\{H_{6}^1,H_{6}^3\},\{H_{6}^1,H_{6}^2,\{H_{6}^1,H_{6}^4\},\emptyset\}$;
\item $\varDelta^{G_6^{6}}_{f}=\{\{H_{6}^1\},\{H_{6}^2\},\{H_{6}^3\},\{H_{6}^4\},\{H_{6}^1,H_{6}^3\},\{H_{6}^2,H_{6}^3\},\{H_{6}^3,H_{6}^4\},\emptyset\}$;

\end{enumerate}
Suppose the first one is true. Then there is one new F-orbits added. Thus, $\Theta_F=\Theta_F \cup \text{Upper}(O_{4.7}) $. According to the relationship in Fig. \ref{fig:pattern},
\begin{eqnarray*}
\Theta_T=&\{&O_{1.0}, O_{2.0}, O_{2.1}, O_{3.0}\sim O_{3.4}, O_{4.0}, O_{4.2}, O_{4.3}, O_{4.6}, O_{4.9}， \\
&& O_{4.11}, O_{5.1},  O_{5.3}, O_{5.16}, O_{6.7}, O_{6.24}, O_{7.20} \}. \\
\Theta_F=&\{& O_{4.7},O_{4.10}, O_{5.0}, O_{5.5}, O_{5.9}, O_{5.10}, O_{5.11}, O_{5.15},\\
&&O_{6.0}\sim O_{6.6}, O_{6.8}, O_{6.9}, O_{6.13},O_{6.15}, O_{6.16}, O_{6.19}\sim  O_{6.23},\\
&& O_{7.0}\sim O_{7.19}, O_{7.21}\sim O_{7.24}, O_{7.26}\sim O_{7.29},  \\
&& O_{8.0}\sim O_{8.24}, O_{9.0}\sim O_{9.16}, \\
&& O_{10.0}\sim O_{10.11}, O_{11.0}\sim O_{11.4}, O_{12.0}, O_{12.1,}, O_{13.0}, O_{14.0}\}.
\end{eqnarray*}
\begin{table}[H]
\centering
\caption{k-combinations of 1-orbits of $G_6^{3}$ \label{table:G_6^3}}
{\begin{tabular}{ | p{1cm}  | p{10cm} |}
\hline 
 k& combinations \\ 
 \hline
$1$& $O_{1.0} \times 2$,  $O_{3.1}$,  $O_{3.2} \times 2$,   $O_{3.4}$ \\ 
\hline
$2$& $O_{2.1}$, $O_{4.0} \times 2$, $O_{4.2} \times 2$,  $O_{4.8} \times 2$,  $O_{4.10} \times 2$,\newline
$O_{6.7} \times 2$, $O_{6.14} \times 2$, $O_{6.23}$, $O_{6.24}$, \\ 
\hline
$3$& $O_{5.3}$, $O_{5.10} \times 2$,  $O_{5.15}$, $O_{7.0} \times 2$, $O_{7.20} \times 2$, $O_{7.25} \times 2$,\newline
 $O_{7.27} \times 4$, $O_{7.29} \times 2$, $O_{9.12} $, $O_{9.13} \times 2$,  $O_{9.14}$. \\
\hline
\end{tabular}}
\end{table}
\textbf{Step 6:} Now we consider $\varDelta^{G_6^{3}}_{f}$. The combinations of the orbits on 1-tuples under $G_6^3$ are shown as Table \ref{table:G_6^3}. Because all the $4$-combinations of $H_3^i$ have at least $8$ elements and for all $i>7$ and $j$, $O_{i,j} \subseteq \Theta_F$, $\varDelta^{G_6^{3}}$ can only have 1-combination, 2-combinations, or 3-combinations. According to $\Theta_T$,  $\varDelta^{G_6^{3}}_{f}$ currently has 6 1-combinations, 8 2-combinations and 3 3-combinations. According to $\Theta_T$ and $\Theta_F$, for the orbits of the 2-combinations and 3-combinations, the free orbits are $O_{4.8}$, $O_{6,14}$ and $O_{7.25}$. Furthermore, $O_{4.8} \in O_{7.25}$. Therefore, there two possible settings of $O_{4.8}$, $O_{6.14}$ and $O_{7.25}$, for $\varDelta^{G_6^{3}}_{f}$ to be one.
One is to set $O_{4.8}$, $O_{6.14}$ and $O_{7.25}$ as F-orbits, and the other one is to set $O_{7.25}$ and $O_{4.8}$ as T-orbits and $O_{6.14}$ as an F-orbit. Suppose the former one is true. We update $\Theta_T$ and $\Theta_F$ accordingly and obtain the followings, 
\begin{eqnarray*}
\Theta_T=&\{&O_{1.0}, O_{2.0}, O_{2.1}, O_{3.0}\sim O_{3.4}, O_{4.0}, O_{4.2}, O_{4.3}, O_{4.6}, O_{4.9}， \\
&& O_{4.11}, O_{5.1},  O_{5.3}, O_{5.16}, O_{6.7}, O_{6.24}, O_{7.20} \}. \\
\Theta_F=&\{& O_{4.7}, O_{4.8}, O_{4.10}, O_{5.0}, O_{5.4}, O_{5.5}, O_{5.9}, O_{5.10}, O_{5.11}, O_{5.13}, O_{5.15},\\
&&O_{6.0}\sim O_{6.6}, O_{6.8}\sim O_{6.10}, O_{6.13}\sim O_{6.16}, O_{6.18}\sim  O_{6.23},\\
&& O_{7.0}\sim O_{7.19}, O_{7.21}\sim O_{7.29},   O_{8.0}\sim O_{8.24}, O_{9.0}\sim O_{9.16}, \\
&& O_{10.0}\sim O_{10.11}, O_{11.0}\sim O_{11.4}, O_{12.0}, O_{12.1,}, O_{13.0}, O_{14.0}\}.
\end{eqnarray*}

\textbf{Step 7:} Now we are ready to consider $\chi(\varDelta^{G_6^{1}}_{f})$ and $\chi(\varDelta_{f_{x_1=1}})$. According to  $\Theta_T$ and $\Theta_F$, currently we have $\chi(\varDelta^{G_6^{1}}_{f})=1$ and $\chi(\varDelta_{f_{x_1=1}})=7$, and the only free orbits are $O_{5.4}, O_{5.6}, O_{5.12}, O_{6.10}, O_{6,12}$ and $O_{6.17}$. The size of these orbits together with their relations are shown in Fig. \ref{fig:001}, where the $a/b$ implies that orbit has totally $a$ elements where $b$ of them contains variable $x_1$. Recall the definition of $\chi(\varDelta)$ in Eq. (\ref{eq:euler}), once adding an orbits $O_{i,j}$ to $\Theta_T$, $\chi(\varDelta^{G_6^{1}}_{f})$ and $\chi(\varDelta_{f_{x_1=1}})$ increase 
$(-1)^{i+1}|O_{i,j}|$ and $(-1)^{i}|O_{i,j}|$, respectively. Note that whatever the types these free orbits own, the values of $\chi(\varDelta^{G_6^i}_{f})$, $2 \leq i \leq 11$ remain unchanged. Therefore, for this subcase, it suffices to show there is no setting of the types of these free orbits can satisfy both $\chi(\varDelta^{G_6^1}_{f})=1$  and $\chi(\varDelta_{f_{x_1=1}})=1$. In order to make $\chi(\varDelta^{G_6^1}_{f})$ be 1, there are two possible cases, 
\begin{enumerate}
\item T-orbits: $O_{6.12}, O_{6.17}, O_{5.6}, O_{5.12}$; F-orbits: $O_{6.10}, O_{5.4}$.
\item T-orbits: $O_{6.12}, O_{6.17}, O_{5.6}, O_{5.12}, O_{6.10}, O_{5.4}$.
\end{enumerate} 
One can check that $\chi(\varDelta_{f_{x_1=1}}) \neq 1$ in neither of the above cases. Then the checking process will back to the previous step and consider other possible cases.
\begin{figure}[H]
\begin{center}
\includegraphics[width=3in]{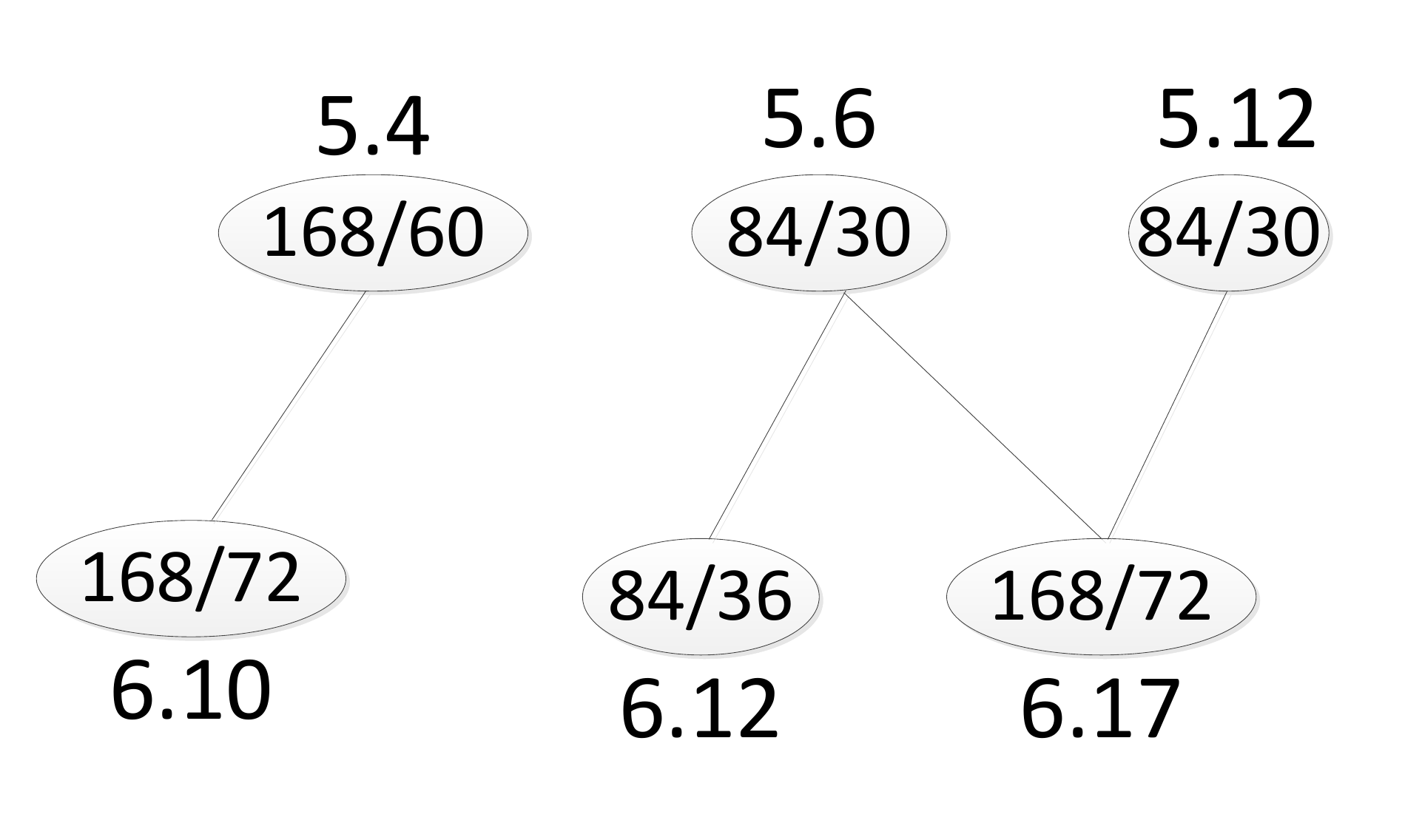} 
\end{center} 
\caption{\textbf{Remained free orbits}}
\label{fig:001}
\end{figure}

\section*{References}

\bibliography{sigproc}

\end{document}